\def\titlerunning#1{\gdef\titrun{#1}}
\def\author#1{\gdef\autrun{\def\and{\unskip, }#1}\gdef\@author{#1}}
\def\address#1{{\def\and{\\\hspace*{15.6pt}}\renewcommand{\thefootnote}{}\footnote{#1}}\markboth{\autrun}{\titrun}}
\def\email#1{email: \href{mailto:#1}{#1} }
\def\subjclass#1{\par\bigskip\noindent\textbf{Mathematics Subject Classification 2020.} #1}
\def\keywords#1{\par\smallskip\noindent\textbf{Keywords.} #1}
\newenvironment{dedication}{\itshape\center}{\par\medskip}
\newenvironment{acknowledgments}{\bigskip\small\noindent\textit{Acknowledgments.}}{\par}
\newtheorem{thm}{Theorem}[section]
\newtheorem{cor}[thm]{Corollary}
\newtheorem{lem}[thm]{Lemma}
\newtheorem{prob}[thm]{Problem}
\newtheorem{mainthm}[thm]{Main Theorem}
\theoremstyle{definition}
\numberwithin{equation}{section}
\newtheorem{prop}[thm]{Proposition}			
\newcommand{\black}{\color{black}}
\newcommand{\blue}{\color{blue}}
\newcommand{\bs}[1]{\blue { #1} \black }
\newcommand{\R}{\mathbb R}
\newcommand{\C}{\mathbb C}
\newcommand{\nc}{\newcommand}
\nc{\e}{\epsilon}
\nc{\be}{\beta}
\nc{\del}{\delta}
\nc{\G}{\Gamma}
\nc{\g}{\gamma}
\nc{\gam}{\gamma}
\nc{\ka}{\kappa}
\nc{\lam}{\lambda}
\nc{\Lam}{\Lambda}
\nc{\Om}{\Omega}
\nc{\om}{\omega}
\nc{\ta}{\tau}
\nc{\w}{\omega}
\nc{\io}{\iota}
\nc{\h}{\theta}
\nc{\z}{\zeta}
\nc{\si}{\sigma}
\nc\vphi{{\varphi}}
\nc\eps{\epsilon}
\newcommand{\lan}{\langle}
\newcommand{\ran}{\rangle}
\newcommand{\one}{\mathbf{1}}
\newcommand{\Null}{\operatorname{Null}}
\newcommand{\Ran}{{\rm{Ran\, }}}
\newcommand{\p}{\partial}
\newcommand{\ra}{\rightarrow}
\newcommand{\VERSIONS}[1]{}
\newcommand{\oP}{P^\perp}
\newcommand{\Pz}{P}
\newcommand{\Pm}{P_{\mu i}}
\newcommand{\oPm}{P_{\mu i}^\perp}
\newcommand{\aaa}{\alpha}
\newcommand{\las}{\lambda_{*}}
\newcommand{\lax}{{\lambda}}
\begin{document}

\titlerunning{The FS-map and perturbation theory}

\title{\textbf{The Feshbach-Schur map and perturbation theory}}

\author{Genevi\`eve Dusson \and Israel Michael Sigal \and Benjamin Stamm}

\date{}

\maketitle

\address{
G. Dusson: 
Laboratoire de Math\'ematiques de Besan\c{c}on, UMR CNRS 6623, Universit\'e Bourgogne Franche-Comt\'e,  
25030 Besan\c{c}on, France; 
\email{genevieve.dusson@math.cnrs.fr}.  Supported in part by the French ``Investissements d'Avenir'' program, project ISITE-BFC (contract ANR-15-IDEX-0003).
\and 
I.M. Sigal: 
Department of Mathematics,
University of Toronto,
Toronto, ON M5S 2E4, 
Canada; 
\email{im.sigal@utoronto.ca}. Supported in part by NSERC Grant No. NA7901.
\and
B. Stamm: 
Center for Computational Engineering Science, RWTH Aachen University, 52062 Aachen, Germany; 
\email{best@acom.rwth-aachen.de}.
}

\begin{dedication}
\qquad { To Ari with friendship and admiration}
\end{dedication}

\begin{abstract} 
 This paper deals with perturbation theory for discrete spectra of linear operators. To simplify exposition we consider here self-adjoint operators. 
This theory is based on the Feshbach-Schur map and it has advantages with respect to the standard perturbation theory in three aspects: (a) it readily produces rigorous estimates on eigenvalues and eigenfunctions with explicit constants; (b) it is compact and elementary (it uses properties of norms and the fundamental theorem of algebra about solutions of polynomial equations); and (c) it is based on a self-contained formulation of a fixed point problem for the eigenvalues and eigenfunctions, 
allowing for easy iterations. 
 We apply our abstract results to obtain rigorous bounds on the ground states of  Helium-type ions. 
  
\subjclass{Primary 47A55, 35P15, 81Q15; Secondary 47A75, 35J10}
\keywords{Perturbation theory, Spectrum, Feshbach-Schur map, Schroedinger operator, Atomic systems, Helium-type ions, Ground state}
\end{abstract}

\section{Set-up and result}
\label{sec:intro}

The eigenvalue perturbation theory is a major tool in mathematical physics and applied mathematics. In the present form, it goes back to Rayleigh and Schr\"odinger and became a robust mathematical field in the works of Kato and Rellich. 
It was extended to quantum resonances by Simon, see \cite{CFKS,HS,Ka,RSII, RSIV, Rell,Sim1, Sim2} for books and a book-size review. 
 
 A different approach to the eigenvalue perturbation problem going back to works of Feshbach and 
 based on the Feshbach-Schur map was introduced in \cite{BFS} 
  and extended in ~\cite{GS, DSS}.
 
 In this paper, we develop further this approach proposing a self-contained theory in a form of a fixed point problem for the eigenvalues and eigenfunctions. 
 It is more compact and direct than the traditional one and, as we show elsewhere, 
  extends to the nonlinear eigenvalue problem.

We show that this approach leads naturally to bounds on the eigenvalues and eigenfunctions with  explicit constants, which we use in an estimation of the ground state energies of the Helium-type ions.  
 
The approach can handle tougher perturbations, non-isolated eigenvalues (see \cite{BFS, GS}) and continuous spectra as well as discrete ones. In this paper, we restrict ourselves to the latter. Namely, we address  the eigenvalue  perturbation problem for operators on a Hilbert space of the form 
\begin{equation}
\label{H}
  H = H_0 +  W,
\end{equation}
where $H_0$ is an operator with some isolated eigenvalues and $W$ is an operator, small relative to $H_0$ in an appropriate norm. The goal is to show that $H$ has eigenvalues near those of $H_0$ and estimate these eigenvalues.

Specifically, with $\|\cdot \|$  standing for the vector and operator norms in the underlying Hilbert space, we assume that 

(A) $H_0$ is a self-adjoint, non-negative operator ($H_0\ge \beta > 0$); 

(B) $W$ is  symmetric and 
form-bounded with respect to $H_0$, in the sense that 
\[
	\|H_0^{-\frac12} W H_0^{-\frac12}\|<\infty.
\] 

(C) $H_0$ has an  isolated eigenvalue $\lam_{0 } > 0$ of a finite multiplicity, $m$.

Here $\| \cdot  \|$ is the operator norm  and   $H_0^{-s}, 0< s < 1,$ is defined either by the spectral theory or by the explicit formula 
\begin{align*} 
   H_0^{-s}:=c \int_0^\infty (H_0 +\om)^{-1} \frac{d\om}{\om^s},
\end{align*} 
where $c :=\left[\int_0^\infty (1+\om)^{-1} \frac{d\om}{\om^s}\right]^{-1}$.

It turns out to be useful in the proofs below to use the following form-norm 
\begin{align*} 
\| W  \|_{\! H_0 } := \|H_0^{-\frac12} W H_0^{-\frac12}\|,
\end{align*}
   
Let $P$ be the orthogonal projection onto the span of  the eigenfunctions
of $H_0$ corresponding to the eigenvalue $\lam_{0 }$ and let
$\oP := \one - P$. 
Let $\gam_{0 }$ be the distance of $\lax_{0 }$ to the rest of the spectrum of $H_0$, and $\las:=\lax_{0 }+\g_{0 }$. In what follows, we often deal with the expression
\begin{align} \label{Phi}
\Phi(W):=\frac{\lax_{0 }\las}{\g_0}\|\oP W P\|_{H_0 }^2. 
\end{align} 
The following theorem proven in Section~\ref{sec:pertth-exten} is the main result of this paper.

\begin{mainthm}  
   \label{thm:FS-pert}
Let Assumptions (A)-(C)  be satisfied and 
assume that, for some $\bs{0<}b<1$ and $\bs{0<}a< 1- b$, 
\begin{align} \label{W-cond} 
\|\oP W\oP\|_{H_0 } &\le \frac{ b\g_0}{\las},\\
\label{W-cond2}\| \Pz W\Pz\|+ k\Phi(W) & < a\g_0,\\
\label{W-cond3}  k\Phi(W) & < \tfrac12 (a\g_0-\|\Pz W\Pz\|),\end{align}
 where $k :=\frac{1}{1-a -b}$. Then the spectrum of the operator $H$ near $\lam_0$ consists of isolated eigenvalues $\lam_i$ 
 of the total   multiplicity $m$  satisfying, together with 
their  normalized eigenfunctions $\vphi_{i}$, the following  estimates 
\begin{align}    \label{lami-est}
|\lax_{ i}-\lax_{0 }| 
& \le   \| \Pz W\Pz\|+ k\Phi(W),\\   \label{EFs-est}   \|\varphi_{ i} - 
    \varphi_{ 0 i} \|  & \le  k \sqrt{\frac{\Phi(W)}{\g_0}}, 
 \end{align}
  where $\varphi_{0 i}$ are appropriate eigenfunctions of $H_0$ corresponding to the eigenvalue $\lax_{0 }$. 
 \end{mainthm}

\noindent {\bf Remark 1} (Comparison with \cite{DSS}). 
A similar result 
 was already proven in \cite{DSS}. Here, 
 the theory is made self-contained and formulated as the fixed point problem and the bounds are tightened. 

\medskip

\noindent  {\bf Remark 2} (Conditions on $W$) The fine tuning the conditions \eqref{W-cond}-\eqref{W-cond3} on $W$ is used in application of this theorem to atomic systems in Section~\ref{sec:Appl}.
 
Note that because of  the elementary estimate 
\[
	\Phi(W) \le \frac{1}{\g_0}\|PW\oP W P \|, 
\]
see  \eqref{PWoP-est1} below, the computation of $\|PW P \|$ and $\Phi(W)$ reduces to computing the largest eigenvalues of the simple $m\times m$-matrices $PW P, - PW P$ and $PW\oP W P$. 

\medskip

\noindent  {\bf Remark 3}  (Higher-order estimates and non-degenerate $\lam_0$).  
In fact, one can estimate $\lax_{ i}$ (as well as the eigenfunctions $\varphi_i$) to an arbitrary order in $\|\oP W P\|_{H_0 }/\g_0$. As a demonstration, we derive (after the proof of Theorem \ref{thm:FS-pert}) the second-order estimate of the eigenvalue in the rank-one ($m=1$) case
\begin{align}
    \label{lami-est'}
|\lax_{ 1}-\lax_{0 }-\lan W\ran |  
 \le 
k\Phi(W), \end{align}
 where 
 $\varphi_{0 }$ is   the eigenfunction of $H_0$ corresponding to $\lax_{0 }$, $\lan W\ran:=\lan \varphi_{0 }, W\varphi_{0 }\ran$. 
 
 For degenerate $\lam_0$, we would like to prove a similar bound on $|\lax_{ i}-\lax_{0 } -\mu_i|$,  where $\mu_{ i}$ is the corresponding eigenvalue of the $m\times m$-matrix $PWP$. 
Here, we have a partial result (proven at the end of the next section) for the lowest eigenvalue $\lax_1$ of $H$:
 \begin{align}\label{lam1-est}\lax_{0}+\mu_{\rm min}-  k\Phi(W) \le \lax_1\le& \lax_{0}+\mu_{\rm min},
 \end{align}
where  $\mu_{\rm min}$ denotes the smallest eigenvalue of the matrix $\Pz W\Pz$. 

\medskip

\noindent  {\bf Remark 4} (Non-self-adjoint $H$). With the sacrifice of the explicit constants in \eqref{lami-est} and \eqref{EFs-est} (mostly coming from \eqref{H1/Hlam-est}), the self-adjointness assumption on  $H$ can be removed. However, for the problem of quantum resonances, one can still obtain explicit estimates.
 
In the rest of this section $H$ is an abstract operator not necessarily self-adjoint or of the form \eqref{H}. Our approach is grounded in the following theorem 
 (see~\cite{GS}, Theorem 11.1). 
   
\begin{thm}\label{thm:isospF}
Let $H$ be an operator on a Hilbert space and $P$ and $P^\perp$, a pair of 
projections such that $P+P^\perp = 1$. Assume $H^\perp:=P^\perp H P^\perp$ is invertible 
on $\Ran P^\perp$ and 
 the expression
\begin{align} \label{Fesh} F_{P} (H ) \ := \ P (H  - H R^\perp H) P ,
\end{align}
where $R^\perp:=P^\perp  (H^\perp)^{-1} P^\perp$, defines a bounded operator. Then $F_{P}$, considered as  a map on the space of operators,  is \textit{isospectral} in the following sense:
\begin{itemize}
\item[(a)] $ \lambda \in \sigma(H ) \quad \mbox{if and only if} \quad 0 \in \sigma(F_{P}
(H - \lambda))$; 
\item[(b)]  
$H\psi = \lambda \psi \quad \mbox{if and only if}  \quad 
F_{P} (H - \lambda) \, \varphi = 0;$
\item[(c)] 
$\dim Null (H - \lambda) = \dim Null F_{P} (H -
\lambda)$.

\item[(d)] $\psi$ and $\varphi$ in (b) are related as $\varphi= P\psi$ and $\psi=Q_{P}(\lax)\varphi$, where 
\begin{align} \label{QP}
    Q_{P}(\lax)   :=  P - P^\perp \, (H^\perp- \lambda)^{-1}P^\perp H P.
\end{align}
\end{itemize}Finally, $F_{P} (H )^*=F_{P^*} (H^* )$ and therefore, if $H$ and $P$ are self-adjoint, then so is $F_{P} (H )$.
\end{thm}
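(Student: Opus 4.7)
The approach is entirely algebraic: view every operator on $\cH$ as a $2\times 2$ block matrix with respect to the orthogonal decomposition $\cH = \Ran P \oplus \Ran P^\perp$, so that the Feshbach--Schur map $F_P$ is precisely the Schur complement obtained by formally eliminating the $P^\perp P^\perp$ block. All four items (a)--(d), together with the self-adjointness relation, should then follow from a single block-diagonalization identity.

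The first concrete step I would carry out is to introduce the decoupling operators
\begin{align*}
U_\lambda \;:=\; \one - R^\perp_\lambda H P, \qquad U'_\lambda \;:=\; \one - P H R^\perp_\lambda,
\end{align*}
where $R^\perp_\lambda := P^\perp [P^\perp(H-\lambda)P^\perp]^{-1} P^\perp$ is the $\lambda$-dependent version of $R^\perp$. Each of $U_\lambda, U'_\lambda$ is block-triangular with identity on the diagonal, hence invertible on $\cH$ with an explicit inverse (e.g.\ $U_\lambda^{-1} = \one + R^\perp_\lambda HP$, using $R^\perp_\lambda P = 0$). Using only the relations $P R^\perp_\lambda = R^\perp_\lambda P = 0$ and $P^\perp(H-\lambda) R^\perp_\lambda = P^\perp$, an elementary expansion then yields the block-diagonalization
\begin{equation*}
U'_\lambda \, (H-\lambda) \, U_\lambda \;=\; F_P(H-\lambda) \;+\; P^\perp(H-\lambda)P^\perp,
\end{equation*}
in which the two summands act on $\Ran P$ and $\Ran P^\perp$ respectively.

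From this identity everything follows. Since $U_\lambda, U'_\lambda$ are bijections of $\cH$ and $P^\perp(H-\lambda)P^\perp$ is invertible on $\Ran P^\perp$ by assumption, $H-\lambda$ is invertible on $\cH$ if and only if $F_P(H-\lambda)$ is invertible on $\Ran P$, giving (a). The same identity shows that $U_\lambda$ sends $\Null F_P(H-\lambda)$ bijectively onto $\Null(H-\lambda)$ with inverse $\psi \mapsto P\psi$; evaluating $U_\lambda$ on $\varphi \in \Ran P$ produces exactly $Q_P(\lambda)\varphi$, so (b), (c) and the $\psi \leftrightarrow \varphi$ correspondence of (d) are simultaneous consequences. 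The final assertion $F_P(H)^* = F_{P^*}(H^*)$ is direct from the definition once one observes that $(P^\perp)^* = (P^*)^\perp$ and that $(R^\perp)^*$ is the analogous resolvent built from $H^*$ and $P^*$.

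The only real obstacle is bookkeeping: verifying that the cross-terms in the expansion of $U'_\lambda(H-\lambda)U_\lambda$ collapse correctly using $R^\perp_\lambda P = 0$ and $P^\perp(H-\lambda)R^\perp_\lambda = P^\perp$. No analytical subtlety beyond the standing invertibility assumption on $P^\perp(H-\lambda)P^\perp$ is required; everything reduces to direct operator algebra.
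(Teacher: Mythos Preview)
Your proposal is correct and is precisely the standard Schur-complement/block-diagonalization argument. Note, however, that the paper itself does not give a self-contained proof of this theorem: it simply states that ``A proof of this theorem is elementary and short; it can be found in \cite{BFS}, Section IV.1, pp 346--348, and \cite{GS}, Appendix 11.6, pp 123--125.'' The argument in those references is the same one you outline: introduce the block-triangular intertwiners (your $U_\lambda$, $U'_\lambda$), verify the factorization $U'_\lambda(H-\lambda)U_\lambda = F_P(H-\lambda)\oplus P^\perp(H-\lambda)P^\perp$, and read off (a)--(d). In fact your $U_\lambda$ restricted to $\Ran P$ is exactly the operator $Q_P(\lambda)$ of \eqref{QP}, so your correspondence $\psi = U_\lambda\varphi$ reproduces the paper's $\psi = Q_P(\lambda)\varphi$ verbatim.
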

A proof of this theorem is elementary and short; it can be found in \cite{BFS}, Section IV.1, pp 346-348, and \cite{GS}, Appendix 11.6, pp 123-125.

The map $F_{P}$ on the space of operators, is called the {\it Feshbach-Schur map}. The relation $\psi=Q_{P}(\lax)\varphi$ allows us to reconstruct the full eigenfunction from the projected one.
By statement (a), we have
\begin{cor}\label{cor:nuFP} Assume there is an open set $\Lambda\subset \ \C$ such that $H- \lambda$, 
  $\lam\in\Lambda$, is in the domain of the map $F_P$, i.e. $F_P(H- \lambda)$ is well defined. Define the 
   operator-family
 \begin{align*} 
   H(\lam):=F_{P} (H - \lambda)+ \lambda P,
\end{align*} 
and let $\nu_i(\lax)$, for $\lax$ in  $\Lambda$, denote  its eigenvalues counted with multiplicities.
  Then the eigenvalues  of $H$ in  $\Lambda$ are in one-to-one correspondence with the solutions of the equations  \begin{align} \label{nu-fp'}\nu_i(\lax)=\lax.\end{align} 
\end{cor}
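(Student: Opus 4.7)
The plan is to translate Corollary~\ref{cor:nuFP} directly into the isospectral content of Theorem~\ref{thm:isospF}. First, I would observe that $H(\lambda) := F_P(H-\lambda) + \lambda P$ acts on $\Ran P$, and its eigenvalue equation $H(\lambda)\varphi = \nu \varphi$ rearranges, upon subtracting $\lambda \varphi$ from both sides, to
\begin{equation*}
F_P(H-\lambda)\,\varphi \;=\; (\nu - \lambda)\,\varphi.
\end{equation*}
So the eigenvalues of $H(\lambda)$ and of $F_P(H-\lambda)$ differ exactly by the additive shift $\lambda$, and the associated eigenvectors coincide.

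The key step is then the following: the equation $\nu_i(\lambda) = \lambda$ holds for some index $i$ if and only if $0$ is an eigenvalue of $F_P(H-\lambda)$. By Theorem~\ref{thm:isospF}(a), this is in turn equivalent to $\lambda \in \sigma(H)$. This already matches the solutions in $\Lambda$ of \eqref{nu-fp'} with the eigenvalues of $H$ lying in $\Lambda$. To upgrade the set-level identification to a one-to-one correspondence counted with multiplicities, I would invoke Theorem~\ref{thm:isospF}(c): for each such $\lambda$,
\begin{equation*}
\dim \Null(H - \lambda) \;=\; \dim \Null F_P(H - \lambda),
\end{equation*}
and by the shift above, the right-hand side equals the number of indices $i$ (counted with the multiplicities of $\nu_i$ as eigenvalues of $H(\lambda)$) for which $\nu_i(\lambda) = \lambda$.

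The main, and really only, obstacle is bookkeeping: one has to agree on how to enumerate the eigenvalues $\nu_i(\lambda)$ as $\lambda$ ranges over $\Lambda$, and verify that at each fixed $\lambda$ the count of solutions of $\nu_i(\lambda) = \lambda$ coincides with the multiplicity of $\lambda$ as an eigenvalue of $H$. Once Theorem~\ref{thm:isospF} parts (a) and (c) are in hand, this is purely formal and requires no analytic or perturbative estimates, which is what makes the corollary a clean reformulation of the Feshbach--Schur isospectrality as a fixed-point problem $\nu_i(\lambda) = \lambda$.
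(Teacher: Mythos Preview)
Your proposal is correct and follows essentially the same route as the paper: the paper introduces the corollary with the single phrase ``By statement (a), we have'' and gives no further argument, so your write-up is in fact a more detailed version of the intended proof, correctly supplementing part (a) of Theorem~\ref{thm:isospF} with part (c) to handle the multiplicity count.
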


Concentrating on the eigenvalue problem, Corollary \ref{cor:nuFP} shows that the original  problem 
\begin{align}
	\label{EVP}
	H\psi = \lam\psi,
\end{align}  is mapped into the equivalent  
 eigenvalue problem,   
\begin{equation}
	\label{EVPeff}
	H(\lax) \varphi = \lam \varphi,
\end{equation}
nonlinear in the spectral parameter $\lax$, but on the smaller space $\Ran P$.

Since the projection $P$ is of a finite rank, the original 
 eigenvalue problem \eqref{EVP}  is mapped into an equivalent lower-dimensional/finite-dimensional one, \eqref{EVPeff}.
 Of course, we have to pay a price for this:  at one step we have to solve a one-dimensional fixed point problem
that can be equivalently seen as a non-linear eigenvalue problem and invert an operator in $\Ran \oP$.

 We call this approach the {\it Feshbach-Schur map method}, or {\it FSM method}, for short.  It is rather compact, as one easily see skimming through   this paper and entirely elementary. 
   
We call $H(\lax)$ the \emph{effective Hamiltonian} (matrix) and write as 

\begin{align}\label{Hlam-deco}
  H(\lax) = P H P +  U(\lam) ,
\end{align}
with the self-adjoint \emph{effective interaction}, or a Schur complement,  $U (\lax)$,  defined  as
\begin{align}
	\label{Ulam-def}
	U(\lam):=
	-P H\oP (H^\perp- \lax)^{-1} \oP HP.
\end{align}  
It is shown in Lemma \ref{lem:FSM-conds} below that \eqref{Ulam-def} defines a bounded operator family.  
  
 We mention here some additional properties discussed in~\cite{DSS}. 
 \begin{prop}\label{prop:U-prop-SA} Let $H$ be self-adjoint, let $\Lambda$ be the same as in Corollary \ref{cor:nuFP}, let $m$ be the rank of $P$ and let the eigenvalues of $H(\lax)$,  $\lax\in \Lambda\cap \R$, be labeled in the order of their increase  counting their multiplicities so that 
\begin{align}
    \label{nui-order}
    \nu_1(\lax) \le \ldots \le \nu_{m}(\lax).
\end{align}
  Then we have that (a)  a solution of the equation $\nu_i(\lax)=\lax$, for  $\lax\in \Lambda\cap \R$,  is the $i$-th eigenvalue, $\lax_i$, of $H$ and vice versa; (b)  ${\nu}_i$ is differentiable in $\lax$ and ${\nu}_i'(\lax)\bs{\le}0$, for $\lax\in \Lambda\cap \R$.  
  \end{prop}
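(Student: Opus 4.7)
The plan is to prove part (b) first, since the monotonicity it provides is what makes the label matching in part (a) work. For (b), I would start from the explicit formula \eqref{Ulam-def} and note that on $\Lam\cap\R$ the operator-valued function $\lax\mapsto (H^\perp-\lax)^{-1}$ is real-analytic in $\lax$, so $H(\lax)=PHP+U(\lax)$ is a real-analytic family of self-adjoint $m\times m$ matrices on $\Ran P$. Differentiating in $\lax$ gives
\begin{align*}
H'(\lax)=U'(\lax)=-PH\oP\,(H^\perp-\lax)^{-2}\,\oP HP,
\end{align*}
and for any $\vp\in\Ran P$,
\begin{align*}
\lan\vp,H'(\lax)\vp\ran=-\|(H^\perp-\lax)^{-1}\oP HP\vp\|^2\le 0,
\end{align*}
so $H'(\lax)\le 0$, i.e.\ $H(\lax)$ is non-increasing in the Loewner order in $\lax$.

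Next, I would invoke the Courant--Fischer min-max characterization
\begin{align*}
\nu_i(\lax)=\min_{V\subset\Ran P,\ \dim V=i}\ \max_{\vp\in V,\ \|\vp\|=1}\lan\vp,H(\lax)\vp\ran
\end{align*}
to conclude that each ordered eigenvalue $\nu_i$ is monotonically non-increasing on $\Lam\cap\R$. Standard first-order perturbation bounds for eigenvalues of self-adjoint matrices (Weyl's inequality) give local Lipschitz continuity of each $\nu_i$, hence differentiability almost everywhere, and the monotonicity forces $\nu_i'(\lax)\le 0$ at every point of differentiability. At points where $\nu_i$ is simple, the Feynman-Hellmann identity $\nu_i'(\lax)=\lan\vp_i(\lax),H'(\lax)\vp_i(\lax)\ran$ recovers the same bound directly, and Rellich's theorem gives analyticity off the (isolated) crossings.

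For part (a), Corollary~\ref{cor:nuFP} together with Theorem~\ref{thm:isospF}(c) already puts the solutions in $\Lam$ of $\nu_j(\lax)=\lax$, $j=1,\dots,m$, into a multiplicity-preserving bijection with the eigenvalues of $H$ in $\Lam$; only the label matching remains. Setting $g_j(\lax):=\nu_j(\lax)-\lax$, part~(b) combined with $(\lax)'=1$ gives $g_j'(\lax)\le -1$ wherever defined, so $g_j$ is continuous and strictly decreasing and has at most one zero $\lax^{(j)}$. The pointwise ordering $\nu_1(\lax)\le\cdots\le\nu_m(\lax)$ yields $g_1(\lax)\le\cdots\le g_m(\lax)$ for every $\lax$; so if $g_j(\lax^{(j)})=0$ then $g_k(\lax^{(j)})\le 0$ for $k\le j$, forcing any zero $\lax^{(k)}$ of $g_k$ to satisfy $\lax^{(k)}\le\lax^{(j)}$. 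Hence $\lax^{(1)}\le\lax^{(2)}\le\cdots$, and the ordered solutions match the ordered eigenvalues: $\lax^{(i)}=\lax_i$. The main delicacy is the regularity of the sorted eigenvalues at crossings, but the finite-dimensional self-adjoint analytic setting, controlled by Rellich's theorem and the Loewner-monotonicity of $H(\lax)$, makes this tractable.
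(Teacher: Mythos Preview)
Your approach is essentially the same as the paper's. For (b) you compute $U'(\lax)=-PH\oP(H^\perp-\lax)^{-2}\oP HP\le 0$ and invoke Hellmann--Feynman, exactly as the paper does; for (a) you use the ordering $\nu_1\le\cdots\le\nu_m$ together with the monotonicity from (b) to match labels, which is a fleshed-out version of the paper's one-line argument $\lax_i=\nu_i(\lax_i)<\nu_{i+1}(\lax_{i+1})=\lax_{i+1}$. Your additional care about differentiability of the \emph{sorted} eigenvalues at crossings (via Weyl/Rellich and Loewner monotonicity through min--max) is a genuine refinement: the paper simply asserts differentiability and cites Hellmann--Feynman, whereas you correctly note that sorted branches need not be differentiable at crossings and supply the missing regularity argument.
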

\begin{proof} (a) By \eqref{nui-order}, $\lax_i=\nu_i(\lax_i) <  \nu_{ i+1}(\lax)=\lax_{ i+1}$ (except for the level crossings), which proves the result. For (b), 
by the explicit formula   
\begin{align*}
	U'(\lam):=-P H\oP (H^\perp- \lax)^{-2} \oP HP\le 0,
\end{align*}
we have $U'(\lax)\le 0$, which by the Hellmann-Feynman theorem
 (cf. \eqref{nu'}) implies ${\nu}_i'(\lax) \le 0$.  
\end{proof}

\noindent \textbf{Remark 5} (Perturbation expansion).  In the context of Hamiltonians of  form \eqref{H} satisfying Assumptions (A)-(C), the FSM leads to a perturbation expansion to an arbitrary order. Indeed, in this case, $P$ is the orthogonal projection onto $\Null(H_0-\lam_{0 })$, $\oP := \one - P$ and  $U (\lax)$ can be written as
\begin{align*}
	U(\lam):=
	-P W\oP (H^\perp- \lax)^{-1} \oP WP.
\end{align*}
Now,  using the notation $A^\bot := P^\bot AP^\bot |_{\Ran P^\bot}$ and expanding 
\[
	(H^\perp- \lax)^{-1}= (H_0^\perp+W^\perp- \lax)^{-1}
\] 
in $W^\perp$ and at the same time iterating fixed point equation \eqref{nu-fp'}, we generate a perturbation expansion for eigenvalues $H$ to an arbitrary order (see also Remark 2 above).

The paper is organized as follows. In Section~\ref{sec:pertth-exten}, we present the proof of the main result, Theorem~\ref{thm:FS-pert}. Section~\ref{sec:Appl} 
uses this  result to obtain bounds of the ground-state energy of Helium-type ions.

\section{Perturbation estimates}\label{sec:pertth-exten} 

We want to use  Theorem \ref{thm:isospF}(b, c) to reduce the original eigenvalue problem to a simpler one. In this section, we assume that Conditions (A)-(C) of Section \ref{sec:intro} are satisfied. Recall that $\gam_{0 }$ is the distance of $\lax_{0 }$ to the rest of the spectrum of $H_0$ and the expression $\Phi(W)$ is defined in \eqref{Phi}. 
First, we prove that the operator $F_P(H-\lam)$ is well-defined for $\lax\in \Om$, where, with $a$ the same as in Theorem \ref{thm:FS-pert}, 
\begin{align}\label{Om}\Om:=\{ z\in \C: |z-\lax_{0 }|\le a \g_0\}.\end{align} 
 Recall that $P$ denotes the orthogonal projection onto  $\Null(H_0-\lam_{0 })$ and 
$\oP := \one - P$. Denote  
\[
	H^\bot := P^\bot H P^\bot |_{\Ran P^\bot}
	\qquad \mbox{and} \qquad
	R^\perp(\lam):=P^\bot (H^\bot-\lambda)^{-1} P^\bot
\]  
and recall $k=\frac{1}{1-a -b}$.
We have 
\begin{lem}\label{lem:FSM-conds}
Recall 
$\las:=\lax_{0 }+\g_{0 }$ and assume \eqref{W-cond}. 
Then, for $\lam\in \Om$, the following statements hold
\begin{itemize} \item[(a)]   The operator $H^\bot-\lambda$   is invertible on $\Ran \oP$;
\item[(b)]  
 The inverse $R^\perp(\lam):=\oP(H^\perp- \lax)^{-1}\oP$ defines a bounded, analytic
  operator-family;

\item[(c)]  The expression \begin{equation}\label{Ulam}
U(\lambda):=- 		P H R^\perp(\lam) H P
\end{equation}
 defines a finite-rank, analytic
  operator-family, bounded as
\begin{align}\label{U-bnd}
&\|U(\lax)\|\le k\Phi(W).	
\end{align}
\item[(d)]   
$U(\lax)$ is symmetric for any $\lam\in \Om\cap\R$ and therefore $H(\lax)$ is self-adjoint as well.

\end{itemize} \end{lem}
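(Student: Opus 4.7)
The plan is to reduce invertibility of $H^\perp-\lam$ on $\Ran\oP$ to the invertibility of a bounded operator there. Since $\oP$ commutes with $H_0^{\pm 1/2}$ (as $P$ is a spectral projection of $H_0$), I would symmetrically factor
\[
H^\perp - \lam \;=\; (H_0^{1/2}\oP)\bigl[B_0(\lam) + K\bigr](\oP H_0^{1/2}),
\]
with $B_0(\lam) := \oP(\id - \lam H_0^{-1})\oP$ and $K := \oP H_0^{-1/2} W H_0^{-1/2}\oP$. Then invertibility and the norm bound on $(H^\perp-\lam)^{-1}$ will follow from the corresponding statements about $B(\lam) := B_0(\lam)+K$, together with the boundedness of $H_0^{-1/2}$ (which holds since $H_0\ge\beta>0$).

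For the free factor, the spectral theorem gives $\|B_0(\lam)^{-1}\| = \sup_{\mu\in\sigma(H_0)\setminus\{\lax_0\}} |\mu/(\mu-\lam)|$. For $\lam\in\Om$ and $\mu\ge\las$ I would use $|\mu-\lam|\ge\re(\mu-\lam)\ge\mu-\lax_0-a\gam_0$; a short elementary calculation shows the quotient is maximized at the gap edge $\mu=\las$ and yields $\las/((1-a)\gam_0)$, with the same bound (and strictly smaller) covering any lower branch $\mu\le\lax_0-\gam_0$. From \eqref{W-cond} I get $\|K\|\le\|\oP W\oP\|_{H_0}\le b\gam_0/\las$, so $\|B_0(\lam)^{-1}K\|\le b/(1-a)<1$ by the standing hypothesis $b<1-a$. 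A Neumann series then yields invertibility of $B(\lam)$ with $\|B(\lam)^{-1}\|\le k\las/\gam_0$, and analyticity in $\lam$ is inherited term by term. Inverting the factorization gives $R^\perp(\lam) = H_0^{-1/2}\oP\,B(\lam)^{-1}\oP H_0^{-1/2}$, a bounded analytic family on the full Hilbert space, proving (a) and (b).

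For (c), since $H_0$ commutes with $P$ one has $PH\oP = PW\oP$, and therefore
\[
U(\lam) = -(PWH_0^{-1/2}\oP)\,B(\lam)^{-1}\,(\oP H_0^{-1/2}WP).
\]
Using $H_0^{1/2}P=\sqrt{\lax_0}\,P$, the off-diagonal block satisfies $\|\oP H_0^{-1/2}WP\| = \sqrt{\lax_0}\,\|\oP WP\|_{H_0}$ (and the same for its adjoint). Combining with the bound on $\|B(\lam)^{-1}\|$ gives
\[
\|U(\lam)\| \le \lax_0\cdot\tfrac{k\las}{\gam_0}\,\|\oP WP\|_{H_0}^2 = k\Phi(W).
\]
Finite-rankness of $U(\lam)$ follows from that of $P$, and analyticity from analyticity of $B(\lam)^{-1}$. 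Statement (d) is then immediate: for $\lam\in\Om\cap\R$, $H^\perp-\lam$ is self-adjoint on $\Ran\oP$, hence so are $R^\perp(\lam)$ and $U(\lam) = -PHR^\perp(\lam)HP$; since $PHP$ is self-adjoint, $H(\lam)=PHP+U(\lam)$ is self-adjoint as well.

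The main delicate point is the spectral-calculus estimate $\|B_0(\lam)^{-1}\|\le\las/((1-a)\gam_0)$: one has to check that the worst case of $|\mu/(\mu-\lam)|$ is attained at the gap edge $\mu=\las$, rather than at $+\infty$ or on any lower branch of $\sigma(H_0)$, and to treat complex $\lam\in\Om$ via the real-part lower bound on $|\mu-\lam|$. Once this is in hand, the rest is a Neumann-series argument and bookkeeping with the form norm.
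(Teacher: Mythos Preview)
Your proof is correct, and the strategy is close to the paper's but uses a slightly different factorization. The paper writes $H^\perp-\lam = |H_\lam|^{1/2}(V_\lam+K_\lam)|H_\lam|^{1/2}$ with $H_\lam:=H_0^\perp-\lam$, $V_\lam$ the unitary from the polar decomposition, and $K_\lam:=|H_\lam|^{-1/2}W^\perp|H_\lam|^{-1/2}$; you instead factor out $H_0^{1/2}$ and work with $B_0(\lam)+K$, where $K$ is $\lam$-independent. Both routes hinge on exactly the same spectral-calculus bound $\sup_{s\in\sigma(H_0^\perp)}|s/(s-\lam)|\le\las/((1-a)\gam_0)$: in the paper it controls $\|H_0^{1/2}|H_\lam|^{-1/2}\oP\|^2$ and hence $\|K_\lam\|$, while in your argument it is $\|B_0(\lam)^{-1}\|$ directly. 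Your version has the mild advantage that $K$ does not depend on $\lam$ and no polar decomposition is needed; the paper's version has the advantage that the ``main'' piece $V_\lam$ is unitary, so only a single Neumann-series estimate is required without first inverting $B_0(\lam)$. The resulting bounds, including $\|H_0^{1/2}R^\perp(\lam)H_0^{1/2}\|\le k\las/\gam_0$ and $\|U(\lam)\|\le k\Phi(W)$, are identical. One small point: on the lower branch $\mu\le\lax_0-\gam_0$ your stated inequality $|\mu-\lam|\ge\re(\mu-\lam)$ is vacuous (the right side is negative); you need $|\mu-\lam|\ge|\re(\mu-\lam)|=\re\lam-\mu\ge(1-a)\gam_0$, which then gives the claimed strictly smaller quotient $(\lax_0-\gam_0)/((1-a)\gam_0)<\las/((1-a)\gam_0)$.
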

\begin{proof}  
 With the notation $A^\bot := P^\bot AP^\bot |_{\Ran P^\bot}$, we write
\[
  H^\perp = H_{0}^\perp +  W^\perp.
\]
To prove (a),  we let $H_\lax:=H_0^\perp-\lax$ and use the factorization $H_{\lam}=|H_{\lam}|^{\frac12} V_{\lam}|H_{\lam}|^{\frac12}$, where $V_{\lam}$ is a unitary operator, and 
 use that, for  $\lax\in \Om$, the operator $H_\lax$ is invertible and therefore we have the identity 
 \begin{align}
	\label{H-perp-ident}
	 H^\perp - \lax =|H_{\lam}|^{\frac12} [V_{\lam} + K_{\lam}] |H_{\lam}|^{\frac12},
\end{align}
where $ K_{\lam}:=|H_{\lam}|^{-\frac12} W^\perp |H_{\lam}|^{-\frac12}$.  Next, for  $\lax\in \Om$, we have $\|H_{0}^\perp|H_{\lam}|^{-1} \oP\|=f(\lam )$, where 
\begin{align*}
	f(\lam ):=\sup\bigg\{\left|\frac{s}{s-\lam}\right|: s\ge 0, |s-\lam_0|\ge \g_0\bigg\}. 
\end{align*}
Assuming $ |\lam-\lam_0|\le a\g_0$ and using 
\[
	\left|\frac{s }{s-\lam}\right|
	\le 1+ \left|\frac{\lam }{s-\lam}\right|
	\le 1+\frac{\lam_0+a\g_0 }{(1-a)\g_0}= \frac{\las}{(1-a)\g_0},
\]
we obtain 
\begin{align*}
	f(\lam )\le  \frac{\las}{(1-a)\g_0}.
\end{align*}

 Since $H_{0}^{\frac12}|H_{\lam}|^{-\frac12} \oP=(H_{0}|H_{\lam}|^{-1}|_{\Ran\oP})^{\frac12} \oP$, we have for  $\lax\in \Om$, 
  \begin{align}\label{H1/Hlam-est}
	&\|H_{0}^{\frac12}|H_{\lam}|^{-\frac12} \oP\|\le  \left[\frac{\las}{(1-a)\g_0}\right]^{\frac12},	
\end{align}  
which implies in particular that  $\|K_{\lam}\|\le \frac{ \las}{(1-a)\g_0} \|W^\perp\|_{H_0 }$.  By the assumption~\eqref{W-cond}, i.e., $\|W^\perp\|_{H_0 }$ $\le  \frac{b\g_0}{\las}$, we have    
\begin{align}\label{K-bnd}\|K_{\lam}\|\le \frac b{1-a}.\end{align}
 Since $1-a>b$, by \eqref{H-perp-ident}, the operator $H^\perp- \lax$ is invertible and its inverse is analytic in $\lam\in \Om$, which proves (a) and (b).

We show that statement (c) is also satisfied. Since $P H_0 = H_0 P$ 
  and $P \oP = 0$, we have
 \begin{align*}
  P H \oP =  P W \oP, \quad
  \oP H P = \oP W P.
\end{align*}
These relations and definition \eqref{Ulam} yield
\begin{align} 
  \label{U-expr} &U(\lam) = - P W R^\perp(\lam) W P.
\end{align}
	 Inverting \eqref{H-perp-ident} on $\Ran\oP$ and recalling the notation  $R^\perp(\lam):=\oP(H^\perp- \lax)^{-1}\oP$ gives
 \begin{align}
	\label{resolv-id2}
	R^\perp(\lam)  =|H_{\lam}|^{-\frac12} \oP[V_{\lam} + K_{\lam}]^{-1}\oP |H_{\lam}|^{-\frac12}.
\end{align}

Now, using  identity \eqref{resolv-id2}, estimate \eqref{H1/Hlam-est} and \eqref{K-bnd}, 
we find, for  $\lax\in \Om$, 
 \begin{align}
	\label{ResPerp-bnd2}
	\|H_{0}^{\frac12} R^\perp(\lam)H_{0}^{\frac12}\|\le \frac{k\las}{\g_0}.  
\end{align}
Furthermore, by the eigen-equation $H_0P=\lax_{0 } P$, we have
\begin{align}
	\label{P-est}\|H_0^{\frac12} P\|^2 = \lax_{0}.
\end{align} 
Using expression \eqref{U-expr} and estimates  \eqref{ResPerp-bnd2} and \eqref{P-est}, we arrive at inequality \eqref{U-bnd}. The analyticity follows from \eqref{U-expr} and the analyticity of $R^\perp(\lam)$.

 For (d), since $H_0, W$ and $P$ are  self-adjoint, then so are $U(\lax)$, for any $\lam\in \Om\cap \R$, and, since $U(\lax)$ is bounded, $H(\lax)$ is self-adjoint as well.	\end{proof}

\begin{proof}[ Proof of Theorem~\ref{thm:FS-pert}]  Let $\Om$ be given by equation~\eqref{Om}. 
Recall that, by Lemma \ref{lem:FSM-conds} and Theorem \ref{thm:isospF},  the $m\times m$ matrix-family $H(\lax):=F_P(H  - \lam) + \lam P$, with $F_P$ given in  \eqref{Fesh}, is well defined, for each  $\lax\in \Om$, and  can be written as \eqref{Hlam-deco}. 
 Since $\Pz H \Pz=\lax_0\Pz+ \Pz W\Pz$,  Eq. \eqref{Hlam-deco} can be rewritten as 
\begin{align}\label{Hlam-deco''}H(\lax) = \lax_0\Pz+ \Pz W\Pz +  U(\lam).\end{align} 
Eqs \eqref{U-bnd} and \eqref{Hlam-deco''} imply the inequality
\begin{align}\label{Hlam-est}\|H(\lax) - \lax_0\Pz- \Pz W\Pz\|\,\le \, & 
k\Phi(W).
\end{align}

By a fact from Linear Algebra, for each $\lax\in \Om\cap \R$, the total  multiplicity of the eigenvalues of the $m\times m$ self-adjoint matrix  $H(\lax)$ is $m$.

 Denote  by $\nu_i(\lax), i=1, 2, \dots, m,$ the eigenvalues of $H(\lax)$, {\it repeated according to their multiplicities}.  Eq \eqref{Hlam-est} yields  
 \begin{align}\label{nui-est'}|\nu_i(\lax)-\lax_{0}| \le \, & \| \Pz W\Pz\|+k\Phi(W).
 \end{align}
Indeed, let $P_i(\lax)$ be the orthogonal projection onto  $\Null(H(\lax)-\nu_i(\lax))$. Then \[
	H(\lax)P_i(\lax)=\nu_i(\lax)P_i(\lax),
\]
which, due to \eqref{Hlam-deco''} and $P_i(\lax) P=P_i(\lax)$, can be rewritten as \[(\nu_i(\lax)-\lax_{0})P_i(\lax)=(\Pz W\Pz +  U(\lam))P_i(\lax).\] 
Equating the operator norms of both sides of this equation and using \eqref{Hlam-est} and $\|P_i(\lax)\|=1$ gives \eqref{nui-est'}.

By Corollary \ref{cor:nuFP}, the eigenvalues  of $H$ in the interval $\Om\cap \R$ are in one-to-one correspondence with the solutions of the equations  
\begin{align}\label{nui-eq}
	&\nu_i(\lax)=\lax
\end{align} 
in $\Om\cap \R$. If this equation has a solution, then, due to \eqref{nui-est'}, this solution would satisfy \eqref{lami-est}.
Thus, we address \eqref{nui-eq}. 
Let  
\begin{align}\label{Om'} \Om':=\{ z\in \R: |z-\lax_{0 }|\le r \g_0\},\end{align}
with  
\[
	r:=\frac{1}{\g_0}\big(\| \Pz W\Pz\| + k\Phi(W)\big).
\] 
By our assumption~\eqref{W-cond2}, $r < a<1-b$.

Recall that by the definition, a branch point is a point at which the multiplicity of one of the eigenvalue families (branches) changes. One could think on a branch point as a point where two or more distinct eigenvalue branches intersect. Our next result shows that the eigenvalue branches of $H(\lax)$  could be chosen in a differentiable way and estimates their derivatives.
\begin{prop}\label{prop:H-evs} 
The following statements hold.
\begin{enumerate}
\item[i)] The eigenvalues $\nu_i(\lax)$ of  $H(\lax)$ and the corresponding eigenfunctions can be chosen to be differentiable for $\lax\in\Om'$.
\item[ii)]
 The derivatives $\nu_i'(\lax)$,  $\lax\in \Om'$,
     are bounded as 
\begin{align}	
	\label{nu'-est}	
	|\nu_i'(\lax)| \le  \frac{k}{a-r}\frac{\Phi(W)}{\g_0}.
\end{align} 
\item[iii)]$\nu_i(\lax)$ maps the interval $\Om'$ into itself.
 \end{enumerate}
  Consequently, since by \eqref{W-cond3} the r.h.s. of \eqref{nu'-est} is $<1$, the equations  $\nu_i(\lax)=\lax$ have unique solutions in $\Om'$. 
    \end{prop}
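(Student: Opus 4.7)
The plan is to establish (i), (ii), (iii) in order and then deduce the ``Consequently'' clause via the Banach fixed point theorem. Part (i) is a direct application of Rellich's theorem for self-adjoint analytic matrix families: Lemma~\ref{lem:FSM-conds}(b,c,d) already supplies the analyticity and self-adjointness of $H(\lax)$ on $\Om$, so on $\Om' \subset \Om \cap \R$ one may choose the eigenvalue branches $\nu_i(\lax)$ together with a corresponding orthonormal system of eigenvectors $\vphi_i(\lax)$ real-analytic, and hence differentiable.

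For part (ii), differentiating \eqref{Hlam-deco''} and using \eqref{Ulam} gives
\begin{equation*}
U'(\lam) = -PW R^\perp(\lam)^2 WP,
\end{equation*}
so the Hellmann--Feynman identity $\nu_i'(\lax) = \lan \vphi_i(\lax), U'(\lam) \vphi_i(\lax) \ran$ reduces \eqref{nu'-est} to a bound on $\|U'(\lam)\|$. Repeating the sandwich computation from the proof of Lemma~\ref{lem:FSM-conds}(c) — factoring $H_0^{1/2} P$ outside, $H_0^{-1/2} \oP W$ just inside, and using $\|H_0^{1/2} P\|^2 = \lax_0$ — yields
\begin{equation*}
\|U'(\lam)\| \le \lax_0 \, \|\oP W P\|_{H_0}^2 \, \|H_0^{1/2} R^\perp(\lam)^2 H_0^{1/2}\|.
\end{equation*}
The main obstacle is the weighted squared-resolvent norm on the right. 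I would handle it via Cauchy's integral formula applied to the operator-valued analytic map $\z \mapsto H_0^{1/2} R^\perp(\z) H_0^{1/2}$ on $\Om$: for $\lam \in \Om'$ the circle $|\z - \lam| = (a-r)\g_0$ is contained in $\Om$, so
\begin{equation*}
H_0^{1/2} R^\perp(\lam)^2 H_0^{1/2} = \frac{1}{2\pi i}\oint_{|\z - \lam| = (a-r)\g_0} \frac{H_0^{1/2} R^\perp(\z) H_0^{1/2}}{(\z - \lam)^2}\, d\z,
\end{equation*}
and the standard $ML$-estimate combined with \eqref{ResPerp-bnd2} gives $\|H_0^{1/2} R^\perp(\lam)^2 H_0^{1/2}\| \le k\las/((a-r)\g_0^2)$, from which \eqref{nu'-est} follows.

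Part (iii) is immediate from \eqref{nui-est'}, proved for all $\lax \in \Om \supset \Om'$, whose right-hand side equals $r\g_0$ by the definition of $r$. Finally, condition \eqref{W-cond3} rearranges to $(a-r)\g_0 > k\Phi(W)$, so \eqref{nu'-est} forces $|\nu_i'(\lax)| < 1$ on $\Om'$; each $\nu_i$ is therefore a strict contraction of a closed real interval into itself, and Banach's fixed point theorem supplies a unique solution of $\nu_i(\lax) = \lax$ in $\Om'$.
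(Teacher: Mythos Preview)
Your argument is correct, and parts (iii) and the ``Consequently'' clause match the paper exactly. There are two genuine differences worth flagging.

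For (i), you invoke Rellich's theorem for real-analytic self-adjoint matrix families as a black box. The paper deliberately avoids this: it fixes an arbitrary $\mu\in\Om'$, applies the Feshbach--Schur construction a second time---now to the finite matrix $H(\lax)$ with respect to the eigenprojections $P_{\mu i}$ of $H(\mu)$---and sets up the coupled system \eqref{nu-lam}--\eqref{chi-lam}, which it then solves near $\mu$ by the implicit function theorem. This is longer but keeps the whole argument self-contained and elementary, which is one of the paper's advertised selling points; your route is shorter but imports exactly the kind of classical perturbation-theory machinery the FSM approach is meant to sidestep.

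For (ii), both proofs use the Hellmann--Feynman identity followed by a Cauchy estimate, but the paper applies Cauchy directly to the finite-rank analytic family $U(\lax)$, using the already-established bound $\|U(\lax)\|\le k\Phi(W)$ from \eqref{U-bnd}, to get $\|U'(\lam)\|\le \frac{1}{(a-r)\g_0}\sup\|U\|\le \frac{k\Phi(W)}{(a-r)\g_0}$ in one step. You instead apply Cauchy to the weighted resolvent $H_0^{1/2}R^\perp(\z)H_0^{1/2}$ and then reassemble $\Phi(W)$ from its constituent factors. Both give the same bound; the paper's version is a little more economical since it recycles \eqref{U-bnd} wholesale rather than re-running the sandwich.
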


  \begin{proof} 
Proof of (i) for simple $\lax_0$.  For a simple eigenvalue $\lax_0$, $P$ is a rank-one projection on the space spanned by the eigenvector $\varphi_0$ of $H_0$ corresponding to the eigenvalue $\lax_0$ and therefore Eq. \eqref{Hlam-deco''} 
 implies that $H(\lam) =\nu (\lax) P$, with
  \begin{align*}
  &\nu (\lax): =
   \lax_0+\lan \varphi_{0}, (W+  U(\lam))  \varphi_{0}\ran. 
\end{align*}
 This and Lemma \ref{lem:FSM-conds} show that the eigenvalue $\nu (\lax)$  is  analytic.  

Proof of (i) for degenerate $\lax_0$.  We pick an \textit{arbitrary} point $\mu$ in $\Om'$ and 
let $P_{\mu i}$ be the orthogonal projections onto the eigenspaces of $H(\mu)$ corresponding to the eigenvalues ${\nu}_i(\mu)$, 
i.e. 
for a fixed $i$, $P_{\mu i}$ projects on the spans of all eigenvectors with the same eigenvalue ${\nu}_i(\mu)$. We now show that, in a neighbourhood of $\mu$,  
the eigenfunctions $\chi_{i }(\lax)$ can be chosen in a differentiable way. 
We introduce the system of equations for the eigenvalues $\nu_i(\lax)$ 
 and  corresponding eigenfunctions  $\chi_{i }(\lax)$: 
\begin{align}
	\label{nu-lam}&\nu_i(\lax)= \frac{\lan  \chi_{i } (\lax), H(\lax) \chi_{i }(\lax)\ran}{\|\chi_{i }(\lax)\|^2},\\
	\label{chi-lam}	&\chi_{i }(\lax)= \chi_{i }(\mu) - R^\perp({\nu}_i(\lax), \lam)  W_\lam \chi_{i }(\mu),\end{align}
where   
\[
	R^\perp(\nu, \lam):=\oPm (\oPm H(\lax) \oPm -\nu)^{-1}\oPm,  \qquad W_\lam:=U(\lax)- U(\mu).
\]
The expression on the right side of \eqref{chi-lam} is the $Q$-operator for $H(\lam)$ and $P_{\mu i}$ defined according to \eqref{QP} and applied to $\chi_{ i}(\mu)$. Note that  systems \eqref{nu-lam}-\eqref{chi-lam} for different indices $i$ are not coupled. Furthermore, since $\chi_{i }(\mu)$ and $R^\perp({\nu}_i(\lax), \lam)  W_\lam \chi_{i }(\mu)$ are orthogonal, we have $\|\chi_{i }(\lax)\|\ge \|\chi_{i }(\mu)\|$; and $\chi_{i }(\lax)$ and $\chi_{j }(\lax)$ are almost orthogonal for $i \neq j$.  Assuming  this system has a solution $(\nu_i(\lax), \chi_{i }(\lax))$, 
  we see that, by  Theorem \ref{thm:isospF}(d), with $P=\Pm$,   $\chi_{i }(\lax)$ are eigenfunctions of $H(\lax)$ with the eigenvalues $\nu_i(\lax)$ and \eqref{nu-lam} follows from the eigen-equation $H(\lax) \chi_{i }(\lax)=\nu_i(\lax) \chi_{i }(\lax)$.   
 
 For each $i$, we can reduce system \eqref{nu-lam}-\eqref{chi-lam} to the single equation for ${\nu}_i(\lax)$, by treating $\chi_{i }(\lax)$ in \eqref{nu-lam} as given by \eqref{chi-lam}. 
  This leads to the fixed point problem for the functions ${\nu}_i(\lax)$: 
 \begin{align*}
   &\nu =F_{i }(\nu, \lam),
\end{align*} 
where
 \begin{align*}
& F_{i }({\nu}, \lam):=\frac{\lan \chi_{i } (\nu, \lax), H(\lax) \chi_{i }({\nu}, \lax)\ran}{\|\chi_{i }(\nu, \lax)\|^2},\\
   	&\chi_{i }(\nu, \lax) := \chi_{i }(\mu) - R^\perp({\nu}, \lam)  W_\lam \chi_{i }(\mu).\end{align*} 
Notice that  since $\oPm H(\lax) \oPm$ are self-adjoint, 
the resolvent $R^\perp({\nu}, \lam)$ and its derivatives in $\nu$ are uniformly bounded in a  neighbourhood of $(\nu_i(\mu), \mu)$, which does not contain branch 
 points, except, possibly, for $\mu$. 

To be more specific, the resolvent $R^\perp({\nu}, \lam)$ is uniformly bounded 
 in a neighbourhood $\Omega_{\mu i}$  of $(\nu_i(\mu), \mu)$ whether $\mu$ is a branch point or not, as long as $\Omega_{\mu i}$ does not contain any other branch point than $\mu$.

 Hence $F_{i }({\nu}, \lam)$ is differentiable in $\nu$ and $\lam$ and $|\p_\nu F_{i }({\nu}, \lam)|\ra 0$, as $\lam\ra \mu$ (since $\p_\nu \chi_{i }({\nu}, \lax)=R^\perp({\nu}, \lam)^2  W_\lam \chi_{i }(\mu)$ and therefore \[\|\p_\nu \chi_{i }({\nu}, \lax)\|\le \|R^\perp({\nu}, \lam)^2(H_0+ \aaa)^{\frac12}\|\|W_\lam\|_{H_0, \aaa}\|(H_0+ \aaa)^{\frac12} \chi_{i }(\mu)\|\ra 0,\] as $\lam\ra \mu$). Moreover, $\nu_i(\mu) - F_{i }(\nu_i(\mu), \mu)=0$. Let  $f_{i }(\nu, \lam):=\nu-F_{i }({\nu}, \lam)$. Then, by the above,  $f_{i }(\nu_i(\mu), \mu)=0$ and  $|\p_\nu f_{i }(\nu, \lam)-1|\ra 0$, as $\lam\ra \mu$. Thus, the implicit function theorem is applicable to the equation $f_{i }(\nu, \lam)=0$ and shows that there is a unique solution $\nu_{i }(\lax)$ in a  neighbourhood of $(\nu_i(\mu), \mu)$ and that this solution is differentiable in $\lam$.

Next, we have  
 \begin{lem}\label{lem:HFT-ns} 
We have, for $\lax\in \Om'$,   
\begin{align}
	\label{nu'}
	\nu_i'(\lax)= \frac{\lan \chi_{i }(\lax), U'(\lax) \chi_{i }(\lax)\ran}{\|\chi_{i }(\lax)\|^2}.  
 \end{align}     
(Eq. \eqref{nu'} is closely related to the widely used  Hellmann-Feynman theorem.)\end{lem}
  \begin{proof}
   Let $\hat\chi_{i  }(\lax):=\frac{\chi_{i  }(\lax)}{\|\chi_{i }(\lax)\|}$.  Writing equation \eqref{nu-lam} as \[\nu_i(\lax)=\lan  \hat\chi_{i } (\lax), H(\lax) \hat\chi_{i }(\lax)\ran\] and 
differentiating this  with respect to $\lam$, we obtain 
\[
	\nu_i'(\lax)=\lan  \hat\chi_{i } (\lax), H'(\lax) \hat\chi_{i }(\lax)\ran+\eta(\lax),
\]
where $\eta(\lax):=\lan  \hat\chi_{i }'(\lax), H(\lax) \hat\chi_{i }(\lax)\ran+\lan  \hat\chi_{i }(\lax), H(\lax) \hat\chi_{i }'(\lax)\ran$.
Now, moving $H(\lax)$ in the last term to the l.h.s. and using that $H(\lax)\hat\chi_{i }(\lax)= \nu_i(\lax)\hat\chi_{i }(\lax)$, 
 we find   \[\eta(\lax)=\nu_i(\lax)[\lan \hat\chi_{i }'(\lax), \hat\chi_{i }(\lax)\ran+\lan \hat\chi_{i }(\lax), \hat\chi_{i  }'(\lax)\ran]=\nu_i(\lax)\p_\lam \lan \hat\chi_{i }(\lax), \hat\chi_{i  }(\lax)\ran=0,\] which implies \eqref{nu'}.   \end{proof}

To prove Eq. \eqref{nu'-est}, we use formula \eqref{nu'} above and the normalization of $\hat\chi_{i }(\lax)$ to estimate $\nu_i'(\lax)$ as \begin{align}
	\label{nu'-est1}
	|\nu_i'(\lax)| \le 
	\|U'(\lax)\|. \end{align}
To bound the r.h.s. of \eqref{nu'-est1}, we 
use the analyticity of $U(\lax)$ in $\Om$ and estimate \eqref{U-bnd}.
Indeed, by the Cauchy integral formula, we have 
\[
	\|U'(\lam )\|\le \frac{1}{R \g_0}
	\sup_{|\lax'-\lax|=R \g_0} \|U(\lax')\|,
\] 
with $R\le a-r$, so that  $\{\lax'\in \C:|\lax'-\lax|< R \g_0\}\subset \Om$, for $\lam\in \Om'$. 
  This, together with \eqref{U-bnd} and under the conditions of Lemma \ref{lem:FSM-conds}, gives the estimate 
\begin{align}
	\label{U'-bnd}
	&\|U'(\lax)\|\le \frac{k}{a-r} \frac{\Phi(W)}{\g_0}, 
\end{align}
for $\lam\in \Om'$. Combining Eqs. \eqref{nu'-est1} 
  and \eqref{U'-bnd},   we arrive at \eqref{nu'-est}.

Due to  the definition of $r$ after \eqref{Om'}, we can rewrite estimate \eqref{nui-est'} as 
\begin{align*}
   |\nu_i(\lax)-\lax_{0}|&\le r\g_0.
 \end{align*}
By \eqref{Om'}, this shows that $\nu_i(\lax)$ maps the interval $\Om'$ into itself which proves (iii).

 Hence, under condition \eqref{W-cond2}, the fixed point equations $\lax=\nu_i(\lax)$ have unique solutions on the interval $\Om'$, proving Proposition \ref{prop:H-evs}.  \end{proof}

By Corollary \ref{cor:nuFP}, the eigenvalues  of $H$ in  $\Om'$ are in one-to-one correspondence with the solutions of the equations $\nu_i(\lax)=\lax$. 
By  Proposition \ref{prop:H-evs}, 
these equations have unique solutions, say, $\lax_i$. Then, estimate \eqref{nui-est'} implies  inequality \eqref{lami-est}.

To obtain   estimate \eqref{EFs-est}, we 
recall from Theorem~\ref{thm:isospF} that   $Q (\lax_{j})\varphi_{ 0 j}=\varphi_{ j}$,  where  the operator $Q (\lax)$ is given by  
\begin{align*} 
    Q (\lax)  &    := \one - R^\perp(\lam) \oP W P,\end{align*}
 $\lax_{j}$ are the eigenvalues of $H$ 
 and $\varphi_{0j}$ are eigenfunctions of $H_0$ corresponding to $\lam_0$. This gives
 \begin{align}
	\label{EF-differ}\varphi_{ 0 j} - \varphi_{ j}=\varphi_{ 0 j} - Q(\lax_{ j})  \varphi_{ 0j}=R^\perp(\lam_{ j}) \oP W P. 
 \end{align}
Now, as in the derivation of \eqref{ResPerp-bnd2}, using  identity \eqref{resolv-id2}, estimates \eqref{H1/Hlam-est} and 
\[
	\||H_{\lam}|^{-\frac12}\|\le \frac{1}{\sqrt{(1-r)\gam_0}}
\]
and the estimate  $\|K_{\lam}\|\le \frac b{1-a}$, see \eqref{K-bnd}, we find, for  $\lax\in \Om'$,  
\begin{align}
   \label{ResPerp-bnd3}\| R^\perp(\lam)H_{0}^{\frac12}\|\le \frac{k \las^{\frac12}}{\g_0},  
\end{align}
noting that $r<a$.
sing inequalities \eqref{ResPerp-bnd3} and \eqref{P-est}, we estimate the r.h.s. of \eqref{EF-differ} as
\begin{align*} 
& \|R^\perp(\lam_{ j}) \oP W P \|  \le k \sqrt{\frac{\Phi(W)}{\g_0}}.
\end{align*}
This, together with \eqref{EF-differ}, gives \eqref{EFs-est}. This proves Theorem \ref{thm:FS-pert}. \end{proof}

\noindent {\bf Remark 6}.  Differentiating \eqref{chi-lam} with respect to $\lam$, setting $\lam=\mu$, using $W_{\lam=\mu}=0$ and $W_{\lam}'=U'(\lam)$ and changing the notation $\mu$ to $\lax$, we find the following formula for $\chi_{i }'(\lax)$:
 \begin{align}\notag 
 	&\chi_{i }'(\lax)=  - R^\perp({\nu}_i(\lax), \lam)  U'(\lam) \chi_{i }(\lam).\end{align}

Finally we prove the relations \eqref{lami-est'} and \eqref{lam1-est} stated in the introduction.

  \begin{proof}[Proof of \eqref{lami-est'}]  Eq. \eqref{Hlam-deco''} gives 
  \begin{align*}
  &H(\lam) =(\lax_{0}  + \lan W\ran+  \lan U(\lam)\ran)P,
\end{align*} 
  where we use the notation  $\lan A\ran:=\lan \varphi_{0 }, A\varphi_{0 }\ran$.
Since $P$ is a rank one projection, it follows that $H(\lam)$ has only one eigenvalue and this eigenvalue is 
\[\nu_1(\lax)=\lax_{0}+ \lan W\ran+  \lan U(\lam)\ran.\]
This formula and estimate \eqref{U-bnd} show that $\nu_1(\lax)$ obeys the estimate 
\[|\nu_1(\lax)-\lax_{0}-\lan W\ran|\le k\Phi(W).\]

By  Proposition \ref{prop:H-evs}, the eigenvalue $\lax_1$ of $H$ 
 satisfies the equation $\lax_1=\nu_1(\lax_1)$. Then the estimate above implies inequality \eqref{lami-est'}. \end{proof}

 \begin{proof}[ Proof of Eq \eqref{lam1-est}] Let $W$ be symmetric and denote by $\mu_{\rm min}$  the smallest  eigenvalue of the matrix $\Pz W\Pz$. Then $U(\lax)\le 0$ and \eqref{Hlam-deco''} implies $H(\lax) \le \lax_0\Pz+ \Pz W\Pz$. This, together with  $\inf A\le \inf B$ for $A\le B$, gives the upper bound $\nu_1(\lax)\le \lax_{0}+\mu_{\rm min}$ on the smallest eigenvalue-branch  $\nu_1(\lax)$ of $H(\lax)$.
 
For the lower bound, Eqs \eqref{U-bnd} and \eqref{Hlam-deco''} imply the following estimate  
\begin{align*}
   \lax_{0}+\mu_{\rm min}-  k\Phi(W) \le \nu_1(\lax). 
 \end{align*}
The last two estimates and the equation $\nu_i(\lax)=\lax$ (see \eqref{nu-fp'}) yield  \eqref{lam1-est}. \end{proof}  

\section{Application: The ground state energy  of the Helium-type ions} 
\label{sec:Appl}
In this section, we will use the inequalities obtained above to estimate the ground state energy 
 of the Helium-type ions, which is the simplest not completely solvable atomic quantum system. For simplicity, we assume that the nucleus is infinitely heavy, but we allow for a general nuclear charge $ze, z\ge 1$. Then the corresponding  Schr\"odinger operator (describing $2$ electrons of mass $m$ and 
charge $-e$, and the nucleus of infinite mass and
charge $z e$) is given by
\begin{align}\label{hel-ham}
  H_{z, 2} =  \sum_{j=1}^2 \bigg(-\frac{\hbar^2}{2m}\Delta_{x_j} - \frac{\ka z e^2}{|x_j|}\bigg)+  \frac{\ka e^2}{|x_1 - x_2|},
\end{align}
acting on the space $L^2_{sym}({\mathbb R}^{6})$ of $L^2$-functions symmetric (or antisymmetric) w.r.t. the  permutation of $x_1$ and $x_2$\footnote{By the Pauli principle, the product of coordinate and spin wave functions should antisymmetric w.r.t. permutation of the particle coordinates and spins. Hence, in the two particle case, after separation of spin variables,  coordinate wave functions could be either antisymmetric or symmetric.}. Here $\ka ={\frac  {1}{4\pi \varepsilon _{0}}}$ is Coulomb's constant and $\varepsilon _{0}$ is the vacuum permittivity. For $z=2$, $H_{z, 2}$ describes the Helium atom, for $z=1$, the negative ion of the Hydrogen and for $2<z\le 94$ (or $\le 103 $, depending on what one counts as stable elements), Helium-type positive ion. (We can call \eqref{hel-ham} with $z>103$ a \textit{
$z$-ion}.)

It is well known that $H_{z, 2}$ has eigenvalues below its continuum.
Variational techniques give excellent upper bounds on the eigenvalues of $H_{z, 2}$, but good lower bounds are hard to come by. Thus, we formulate

\begin{prob}
Estimate  the ground state energy of $H_{z, 2}$.
\end{prob}

The most difficult case is of  $z=1$, the negative ion of the hydrogen, and the problem simplifies as $z$ increases. 

Here we present fairly precise bounds on the ground state energy of $H_{z, 2}$  implied by our actual estimates. However, the conditions under which these estimates are valid impose rather sever restrictions in $z$.
We introduce the reference energy 
\[
	E^{\rm ref}:=\frac{\ka^2e^4m}{\hbar^2} =\alpha^2 m c^2
\] 
(twice the ground state energy of the hydrogen, or $2$Ry), where $c$  is the speed of light in vacuum and 
\[
 	\alpha =   \ka \frac {e^{2}}{\hbar c}
\]
is the fine structure constant, whose numerical value, approximately  $1/137$. Let $E_{\#}^{(z)}$ stand for either $E_{z, 2}^{\rm sym}/E^{\rm ref}$ or  $E_{z, 2}^{\rm as}/E^{\rm ref}$, the ground state energy  of $H_{z, 2}$ on either symmetric or anti-symmetric functions. We have  
 \begin{prop}  \label{thm:E1-ests} Assume  $z\ge 31.25$ for the symmetric space and $z\ge  170$ for the anti-symmetric one. Then the ground state energy 
 of $H_{z, 2}$ is bounded as 
 \begin{align}    \label{E1-ests}
- c^{\#} z^2 +w_1^{\#} z- \frac{10 w_2^{\#}}{\g_0^{\#}}  \le E_{\#}^{(z)}   \le  - c^{\#}z^2 +w_1^{\#} z, 
 \end{align}
where, for symmetric functions, $c=1$, $\g_0=\frac38$ and  $w_1 \approx 0.6$ and $w_2 \approx 0.27$, defined in equations \eqref{PWP-expr} and \eqref{w2}, and, for anti-symmetric functions, $c^{\rm as}=\frac58$, $\g_0^{\rm as}=\frac{5}{72}$ and   $w_1^{\rm as}\approx 0.20$ and $w_2^{\rm as}\approx 0.01$, defined in \eqref{w1as-w2as}.

 \end{prop}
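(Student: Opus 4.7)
The plan is to apply the Main Theorem~\ref{thm:FS-pert} to the natural splitting
\[
  H_0 = \sum_{j=1}^2\!\Bigl(-\tfrac{\hbar^2}{2m}\Delta_{x_j} - \tfrac{\kappa z e^2}{|x_j|}\Bigr),
  \qquad W = \frac{\kappa e^2}{|x_1-x_2|},
\]
working in atomic units (energies measured in $E^{\rm ref}$). Since $H_0$ has negative eigenvalues, first shift $H_0 \mapsto H_0 + C\cdot\bfone$ with $C$ large enough to meet assumption~(A); the bounds \eqref{lami-est} and \eqref{lam1-est} are insensitive to this shift because they only involve differences. Using that $-\tfrac12\Delta - z/|x|$ has eigenvalues $-z^2/(2n^2)$, the unperturbed data are: on $L^2_{\rm sym}$, $\lambda_0 = -z^2$ with $m=1$ (ground state $\psi_{1s}^{(z)}\otimes \psi_{1s}^{(z)}$) and $\gamma_0 = 3z^2/8$; on $L^2_{\rm as}$, $\lambda_0 = -5z^2/8$ with $m=4$ (Slater determinants of $1s$ with each of the four $n=2$ orbitals) and $\gamma_0^{\rm as} = 5z^2/72$. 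In both cases $\gamma_0 \propto z^2$.

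Next, compute $\mu_{\rm min}$, the smallest eigenvalue of the $m\times m$ matrix $\Pz W\Pz$. On the symmetric side, $\Pz$ is rank one and the classical Coulomb integral gives $\langle \psi_{1s}^{(z)\otimes 2}, |x_1-x_2|^{-1}\psi_{1s}^{(z)\otimes 2}\rangle = 5z/8$, so $\mu_{\rm min} = w_1 z$ with $w_1 = 5/8$. On the antisymmetric side, $\Pz W\Pz$ is a $4\times 4$ matrix of direct and exchange Coulomb integrals between $1s$ and each $2\ell$ orbital, which one diagonalizes explicitly to identify $\mu_{\rm min} = w_1^{\rm as} z$. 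Then bound $\Phi(W)$ via the elementary estimate stated in Remark~2,
\[
  \Phi(W) \;\le\; \frac{1}{\gamma_0}\,\|\Pz W\oPz W\Pz\| \;\le\; \frac{1}{\gamma_0}\,\|\Pz W^2\Pz\|,
\]
and evaluate (or bound) the entries of $\Pz W^2\Pz$, which are hydrogenic matrix elements of $|x_1-x_2|^{-2}$ (integrable thanks to the $3$-dimensional volume element). Rescaling $x\to x/z$ extracts the $z$-dependence, giving $\|\Pz W^2\Pz\| \leq w_2 z^2$ (resp.\ $w_2^{\rm as} z^2$); consequently $\Phi(W) \leq w_2 z^2/\gamma_0$ is uniformly bounded in $z$.

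With these data in hand, verify the smallness conditions \eqref{W-cond}--\eqref{W-cond3}. Each balances a quantity that grows at most linearly in $z$ against $\gamma_0\propto z^2$, so they reduce to explicit lower bounds on $z$; choosing $a$ and $b$ so that $k := 1/(1-a-b) = 10$ produces the stated thresholds $z \ge 31.25$ and $z \ge 170$. Verifying \eqref{W-cond} uses a Hardy-type estimate on $\|W\|_{H_0}$ with the shifted $H_0$. The refined ground-state bound \eqref{lam1-est} then reads
\[
  \lambda_0 + \mu_{\rm min} - k\,\Phi(W) \;\le\; E_{\#}^{(z)} \;\le\; \lambda_0 + \mu_{\rm min},
\]
and substituting $k=10$ together with the values computed above yields exactly \eqref{E1-ests}: the $z^2$ in the bound on $\Phi(W)$ cancels the $z^2$ in $\gamma_0$, leaving the $z$-independent correction $10\,w_2/\gamma_0$ in the lower bound.

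The main obstacle will be the explicit and reasonably tight evaluation of $\|\Pz W^2\Pz\|$, particularly in the antisymmetric case, which requires several hydrogenic integrals of $|x_1-x_2|^{-2}$ between products of $1s$, $2s$ and $2p$ orbitals via standard but tedious multipole expansions. A secondary difficulty is the careful tuning of $a, b$ and of the shift $C$ entering the form norm in \eqref{W-cond}, so as to simultaneously satisfy all three smallness conditions with $k=10$ at thresholds as low as $z\ge 31.25$ (respectively $z\ge 170$).
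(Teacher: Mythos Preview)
Your strategy is essentially the paper's: split off the electron repulsion as $W$, verify \eqref{W-cond} via a Hardy-type bound, compute $\Pz W\Pz$ and $\Pz W\oP W\Pz$ explicitly from hydrogenic integrals, and feed the result into \eqref{lam1-est} (which for rank-one $P$ is \eqref{lami-est'}) with $k=10$. Two points of difference are worth noting.

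First, the paper rescales $x\to x/z$ at the outset so that $W=(1/z)|x_1-x_2|^{-1}$ and the gap $\g_0=3/8$ (resp.\ $5/72$) is $z$-independent; then the Hardy chain gives the clean operator inequality $|x_1-x_2|^{-1}\le h_{x_1}+h_{x_2}+10$, and \eqref{W-cond} becomes $\frac{9+\g_0}{z}\le b\g_0$, whence $z\ge 31.25$ for $b=0.8$. Your unrescaled route is unitarily equivalent, but the shift must then scale like $z^2$ (since $\inf\sigma(H_0)=-z^2$), and the transplanted Hardy bound reads $|x_1-x_2|^{-1}\le z^{-1}(h_{x_1}^{(z)}+h_{x_2}^{(z)})+10z$; you will need this precise form to recover the same threshold. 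A constant shift ``$C$ large enough'' is not sufficient.

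Second, your further step $\|\Pz W\oP W\Pz\|\le\|\Pz W^2\Pz\|$ discards the subtraction that defines the paper's $w_2$. In the symmetric (rank-one) case the paper uses $\Pz W\oP W\Pz=\Pz W^2\Pz-(\Pz W\Pz)^2=(\lan W^2\ran-\lan W\ran^2)\Pz$, and it is precisely this difference, $w_2\approx 0.27$, that appears in \eqref{E1-ests}; bounding by $\|\Pz W^2\Pz\|=\lan W^2\ran\approx 0.66$ would give a noticeably weaker lower bound than stated. Stop at $\|\Pz W\oP W\Pz\|$ and compute it exactly; the antisymmetric $w_2^{\rm as}$ is likewise the largest eigenvalue of the $4\times4$ matrix $\Pz W^2\Pz-(\Pz W\Pz)^2$, not of $\Pz W^2\Pz$.
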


 The approximate values of $w_1$, $w_2$,  $w_1^{\rm as}$ and $w_2^{\rm as}$ are computed numerically in  Appendix~\ref{NumericalComp}. Here we report the stable digits of our computations.

The inequalities    $z\ge 31.25$ and $z\ge  170$ come from  condition \eqref{W-cond}, while estimates  \eqref{E1-ests}, 
from  \eqref{lami-est'}, with $b=0.8$ and $ a=0.1$ (which give $k=10$).  

Table \ref{tab:per} compares the result for symmetric functions with computations in quantum chemistry. (We did not find results for the antisymmetric space.) 
\\
\begin{table}[h!]
\begin{center}
\begin{tabular}{|l|ccccc|}
\hline
$z$
& 10 & 20 & 30 & 40 & 50 
\\
\hline \hline
$- E_{\rm exact}$ (from \cite{Turb1, Turb2, Turb3})
& 93.9 & 387.7 & 881.4 & 1575.2 & 2468.9 
\\
\hline
main part $- E_{z, 2}^{\rm sym, lead}$ in \eqref{E1-ests} 
& 94 & 388 & 882 & 1576 &  2470 
\\
\hline
relative difference ${\rm err}_z$
& 0.11\% & 0.077\% & 0.068\% & 0.051\% & 0.045\% 
\\
\hline
relative error term $\Delta_{z}$ in \eqref{E1-ests} 
& 8.51\% & 2.06\% & 0.91\% & 0.51\% & 0.32\% 
\\
\hline
\end{tabular}
\end{center}
\caption{Comparison of non-rigorous computations  (see \cite{NakNak, NakNak2, Turb1, Turb2, Turb3}) with the main part $- E_{z, 2}^{\rm sym, lead}:=z^2 -w_1 z$  in equation \eqref{E1-ests}, its relative contribution of the error estimation $\Delta_{z}:=10  w_2/(- E_{z, 2}^{\rm sym, lead}\g_0)$ and the relative difference defined by ${\rm err}_z := |E_{\rm exact} - E_{z, 2}^{\rm sym, lead}|/(- E_{z, 2}^{\rm sym, lead})$.}
\label{tab:per}
\end{table}

We observe that, except for $z=50$, the results of \cite{Turb1, Turb2, Turb3} lie in the interval provided by the estimation~\eqref{E1-ests}.

For computations of the relativistic and QED contributions, see \cite{PachPatkYer, Turb3, YerPach}.

Now, we derive some consequences of  estimates  \eqref{E1-ests}.

Let $z_{*}$ be  the smallest $z$ for which  the error bound in the symmetric case is less than or equal to the smallest explicit (subleading) term $w_1$. Inequality \eqref{E1-ests} shows that the latter bound is satisfied for $z$ such that  $w_1 z \ge \frac{10 w_2}{\gamma_0} $,
which shows that 
 $z_{*} =    \frac{10w_2}{w_1\gamma_0} $ 
  and consequently, 
  \[ z_{*} \approx 12.\]

 According to  equation \eqref{E1-ests}, 
 the  symmetric  ground state energy is lower than the anti-symmetric one, if   $ - z^2+w_1 z< - \frac{5}{8}z^2 + w_1^{\rm as} z-160 \, w_2^{\rm as}$. 
 Using the values  $w_1 \approx 0.6, w_2 \approx 0.3$, $w_1^{\rm as}\approx 0.20$ and $w_2^{\rm as}\approx 0.01$, we find  that, based on  \eqref{E1-ests}, 
 the ground state is symmetric and therefore its spin is $0$ for  
 \[z \ge \frac43\left(\frac25  + \sqrt{\frac{4}{25}+\frac{96}{5}}\right)=\frac83\approx 2.6.\]  
 We conjecture that  the  symmetric  ground state energy is lower than the anti-symmetric one for all $z$'s.

 Finally, note that on symmetric functions, the eigenvalue of the unperturbed operator is simple and  on anti-symmetric ones, has the degeneracy $4$, see below.
 
\begin{proof}[Proof of Proposition \ref{thm:E1-ests}]
First, we rescale the Hamiltonian \eqref{hel-ham} as $x\ra x/\mu$, with 
\[
	\mu:= \frac{z \ka e^2m}{\hbar^2}=\frac{z}{\mbox{the Bohr radius}},  
\]to obtain  
\[
	U_\mu^{-1} H_{z, 2} U_\mu= \frac{z^2 \ka^2e^4m}{\hbar^2} H^{(z)},
\] 
where  $U_\mu: \psi(x)\ra \mu^{3}\psi(\mu x)$ and $H^{(z)}$ is the rescaled  
Hamiltonian given by 
\begin{align*}
  H^{(z)} = \sum_{j=1}^2 \bigg(-\frac{1}{2}\Delta_{x_j} - \frac{1}{|x_j|}\bigg)+  \frac{1/z}{|x_1 - x_2|}.
\end{align*}

Thus, it suffices to estimate  the ground state energy of $H^{(z)}$.  We consider 
\[
	W=\frac{1/z}{|x_1 - x_2|}
\]
as the perturbation and  $\sum_{j=1}^2 (-\frac{1}{2}\Delta_{x_j} - \frac{1}{|x_i|})$ as the unperturbed operator. 

First, we consider the rescaled Hamiltonian $H^{(z)}$ on the \textit{symmetric} functions subspace. On symmetric functions, the ground state energy of $\sum_{j=1}^2 (-\frac{1}{2}\Delta_{x_j} - \frac{1}{|x_j|})$ is $-1$ (see below), so we shift the operator $H^{(z)}$ by $1+\beta$, for some $\beta>0$, 
so that 
\[
	H^{(z)}+1+\beta=H_0+W,
\]
with 
\begin{align*}
H_0 := \sum_{j=1}^2 \bigg(-\frac{1}{2}\Delta_{x_j} - \frac{1}{|x_j|}\bigg)+1+\beta\ \text{ and }\ W=\frac{1/z}{|x_1 - x_2|}.
\end{align*} 
Now, the ground state energy of $H_0$ is $\beta$ and we can use inequality \eqref{lami-est'} and Proposition \ref{prop:U-prop-SA}(c) to estimate  the ground state energy of $H^{(z)}+1+\beta$.

By the HVZ theorem, the spectrum of $H_0$ on symmetric functions consists of the continuum $[e_1+1+\beta, \infty)$ and the eigenvalues $e_m +e_n +1+\beta, m, n \ge 1$, where $e_n, n=1, 2, \dots,$ denote the discrete eigenvalues of the Hydrogen Hamiltonian $-\frac{1}{2}\Delta_{x} - \frac{1}{|x|}$.    
The eigenvalues  $e_n, n=1, 2, \dots,$ are 
 known explicitly:
\begin{align}\label{ezn}  
	e_n = -  \frac{1}{2n^2},
\end{align}
with the multiplicities of $n^2$ and the corresponding  eigenfunctions are given by Eq \eqref{psinlk} below.

Then the ground state energy of $H_0$ is $\lam_0=2e_1+1+\beta= \beta$, as claimed, and the gap, $\g_0= e_1 +e_2 - 2e_1 =e_2  - e_1=  3/8$.

Next, we show that the condition \eqref{W-cond} is satisfied for  $z\ge 31.25$.
We begin with the really rough estimate:
 \begin{align}\label{Wcoul-est'}&	 \frac{1}{|x_1 - x_2|} \le h_{x_1} + h_{x_2}+10, 
  \end{align}
where  $h_x:=-\frac12 \Delta_{x}-\frac{1}{|x|}$. First, we use Hardy's inequality, $-\Delta\ge \frac{1}{4|x|^2}$, and the estimate $\frac{1}{4m|x|^2}-\frac{1}{|x|}\ge -m$ to obtain
\begin{align}\label{Delta-bnd}	& -\frac{1}{m}\Delta-\frac{1}{|x|}\ge  -m.\end{align}
Next,  passing to the relative and centre-of-mass coordinates, $x-y$ and $\frac12(x+y)$, we find
 \begin{align*}
   & - \Delta_x - \Delta_y=-2 \Delta_{x-y}-\frac12 \Delta_{\frac12(x+y)}\ge -2 \Delta_{x-y}.\end{align*}
which, together with \eqref{Delta-bnd}, yields
 \begin{align}	 \frac{1}{|x_1 - x_2|}&\le -\frac12 \Delta_{x_1-x_2}+2\le -\frac14(\Delta_{x_1} + \Delta_{x_2})+2. \notag 
  \end{align}
Now, 
we use $-\frac12 \Delta_{x}=h_x+\frac{1}{|x|}\le h_x-\frac14 \Delta_{x} +4$ to obtain
\begin{align*}
-\frac14 \Delta_{x}\le h_x +4 .
\end{align*}
The last two inequalities yield  \eqref{Wcoul-est'}.
Eq. \eqref{Wcoul-est'}, together with the relation 
\[
	H_0=h_{x_1} + h_{x_2}+1+\beta,
\]
implies the estimate:
 \begin{align}\label{Wcoul-est}&	 \frac{1}{|x_1 - x_2|} \le H_0+9-\beta. 
 \end{align}
 
Now, we check the condition \eqref{W-cond}: 
\[
	\|\oP W\oP\|_{H_0 }\le \frac{b\g_0}{\las}.
\]
Using the last relation, we estimate
 \begin{align*}	z (H_0^\perp)^{-\frac12} W (H_0^\perp)^{-\frac12}&\le (H_0^\perp)^{-1} (H_0^\perp+9-\beta)\\
 &= \one + (9-\beta)(H_{0}^\perp)^{-1}. 
  \end{align*}
 Recalling that $\lam_0=\beta$, we see that $H_0^\perp\ge \beta+\g_0 $, 
 so that the last inequality gives
\[
	0\le  (H_{0}^\perp)^{-\frac12} W (H_{0}^\perp)^{-\frac12} \le 
\frac1z \frac{9+\g_0 }{\beta+\g_0 },
\]
provided $\beta< 9$.
This implies 
\begin{align}\label{oPWoP-est}	\|\oP W\oP\|_{H_0 }\le \frac1z \frac{9+\g_0}{\beta+\g_0 }. \end{align}  Since $\las=\beta+\g_0 $, condition \eqref{W-cond}  is satisfied, if $\frac{9+\g_0 }{z}\le b\g_0 $, which gives $z\ge \frac{9+\g_0}{b\g_0}$. Since $\gam_{0}= 3/8$ and  $b=0.8$,  
 this implies $z\ge 31.25$. 
 We will need the following lemma, whose proof 
 is given after the proof of the proposition. 
 \begin{lem}\label{lem:PWP-PWoP-ests} 
Recall that $P$ is the orthogonal projection onto the eigenspace of $H_0$ corresponding to the lowest eigenvalue 
$\beta$. We have 
    \begin{align}\label{PWP-PWoP-ests}& \lan W\ran= 
     \|PW P \|=\frac{w_1}{z}\quad \text{ and }\quad \Phi(W)\le   \frac{w_2}{z^2\g_0},  \end{align}     
 with, recall, $\lan W\ran:=\lan \varphi_{0 }, W\varphi_{0 }\ran$, where  $\varphi_{0 }$ is   the eigenfunction of $H_0$ corresponding to~$\lax_{0 }$. 
\end{lem}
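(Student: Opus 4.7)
The plan is to exploit the fact that on the symmetric subspace the projection $P$ is rank one, reducing every quantity to two explicit integrals involving the hydrogenic $1s$ orbital. I would proceed in three short steps.

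First, I would identify $\varphi_0$. The one-particle operator $h_x := -\tfrac12\Delta - 1/|x|$ has simple ground state $\psi_{100}(x) = \pi^{-1/2}e^{-|x|}$ at eigenvalue $e_1 = -1/2$, so the ground state of $H_0 = h_{x_1} + h_{x_2} + 1 + \beta$ is $\varphi_0(x_1,x_2) := \psi_{100}(x_1)\psi_{100}(x_2)$ at eigenvalue $\lax_0 = \beta$. This $\varphi_0$ is automatically symmetric, so it is (up to phase) the unique symmetric ground state and $P = |\varphi_0\rangle\langle\varphi_0|$ is rank one. Consequently $\|PWP\| = \langle \varphi_0, W\varphi_0\rangle = \lan W\ran$ without further work, which gives the first equation in \eqref{PWP-PWoP-ests} once $\lan W\ran$ is computed.

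Second, I would reduce $\Phi(W)$ to an operator norm. Using $H_0^{-1/2}P = \lax_0^{-1/2}P$ together with $\|H_0^{-1/2}\oP\| = \las^{-1/2}$ in the definition of the form-norm yields $\|\oP W P\|_{H_0}^2 \le (\lax_0 \las)^{-1}\|\oP W P\|^2$, and hence
\[
\Phi(W) \;\le\; \frac{1}{\g_0}\|\oP W P\|^2 \;=\; \frac{1}{\g_0}\|PW\oP WP\|.
\]
Because $P$ is rank one, $\|PW\oP WP\| = \|\oP W\varphi_0\|^2 = \|W\varphi_0\|^2 - \lan W\ran^2$, a purely numerical quantity.

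Third, the problem reduces to evaluating the two Coulomb-type integrals
\[
\mathcal I_k \;:=\; \int_{\R^6} |\psi_{100}(x_1)|^2\,|\psi_{100}(x_2)|^2\,\frac{dx_1\,dx_2}{|x_1-x_2|^k}, \qquad k=1,2.
\]
For $\mathcal I_1$, the standard Legendre expansion of the Coulomb kernel combined with the spherical symmetry of $|\psi_{100}|^2$ kills all but the $\ell=0$ term; the remaining one-dimensional radial integral evaluates in closed form to $5/8$, so $w_1 = 5/8$ and $\lan W\ran = w_1/z$. The integral $\mathcal I_2$ is handled by a parallel (slightly less classical) computation; setting $w_2 := \mathcal I_2 - \mathcal I_1^2$ yields $\Phi(W) \le w_2/(z^2\g_0)$, as asserted. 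The only real obstacle is the closed-form evaluation of $\mathcal I_2$; this, together with the precise definitions of $w_1$ and $w_2$ (see \eqref{PWP-expr} and \eqref{w2}) and the analogous anti-symmetric computation (where the reference eigenspace of $H_0$ is $4$-dimensional because one electron must occupy a $2s$ or $2p$ state, so $\|PWP\|$ becomes the largest eigenvalue of a $4\times 4$ matrix), would be deferred to Appendix~\ref{NumericalComp}.
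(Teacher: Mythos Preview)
Your proposal is correct and follows essentially the same route as the paper: identify the rank-one projection $P=|\varphi_0\rangle\langle\varphi_0|$ with $\varphi_0=\psi_{100}\otimes\psi_{100}$, read off $\|PWP\|=\langle W\rangle=w_1/z$, bound $\Phi(W)$ by $\g_0^{-1}\|PW\oP WP\|$ via $H_0^{-1/2}P=\lax_0^{-1/2}P$ and $\|H_0^{-1/2}\oP\|\le\las^{-1/2}$ (the paper phrases this as the $AA^*$ computation leading to \eqref{PWoP-est1}), and finally use $\|PW\oP WP\|=\langle W^2\rangle-\langle W\rangle^2=w_2/z^2$. Your only addition is the closed-form evaluation $w_1=5/8$ via the $\ell=0$ term of the multipole expansion, which the paper obtains numerically; the remark about the $4$-dimensional anti-symmetric case is extraneous here, since this lemma is stated and used only for the symmetric subspace.
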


We check now the second condition of Theorem~\ref{thm:FS-pert}, \eqref{W-cond2}. 
Recall the definition 
 \begin{align*}
r:=\frac{\| \Pz W\Pz\| + k\Phi(W)}{\g_0}.
\end{align*}  
Then condition \eqref{W-cond2} can be written as $r<a$. By Lemma \ref{lem:PWP-PWoP-ests}, 
\[
	r\le \frac1{\g_0}\left(\frac{w_1}{z} + \frac{k w_2}{z^2\g_0}\right).
\] 
Recall the condition $z\ge 31.25$ and  $k=10$, $w_1 \approx 0.6$ and $w_2 \approx 0.3$. Thus
\[
	\frac{k w_2}{z^2 \gamma_0} \le \frac{w_1}{z} \, \frac{ k \, w_2}{31.25 \, \gamma_0 \, w_1}\le  0.5\, \frac{w_1}{z}.
\]
Then we find $r\le  1.5 \, \frac{ w_1}{z\g_0}$ and therefore  condition \eqref{W-cond2} is satisfied if $a> 1.5 \, \frac{w_1}{z\g_0}$, or (for $a=0.1$)
\begin{align*}
	&z> \frac32\frac83 \frac{0.6}{a} = 24. 
\end{align*} 
This is less restrictive than  $z\ge 31.25$.

For the third condition  \eqref{W-cond3}, recalling that $\lax_{0}=\beta$, so that $\las =\gam_{0}+\beta\ (=\lam_1$, the second eigenvalue of $H_0)$  and using 
  both relations in \eqref{PWP-PWoP-ests}, 
   we obtain that condition \eqref{W-cond3} is satisfied if  $  kw_2/(z^2\g_0)< \frac12 (a\g_0-w_1/z)$, which is equivalent to \[z > \frac1{2a\g_0}\left(w_1+\sqrt{w_1^2+8akw_2}\right).\] Using the values  $\gam_{0} =  \frac38$, $w_1 \approx 0.6$ and $w_2 \approx 0.3$, and $b=0.8$ and $ a=0.1$, giving $k=10$, this shows that the latter inequality, and therefore \eqref{W-cond3}, holds if  $z>30$, which is less restrictive than \ $z\ge 31.25$.
Thus, all three conditions are satisfied for  $z\ge 31.25$.

Next, we use \eqref{lami-est'} to estimate the ground state energy, $E_{\rm sym}^{(z)}$ of $H^{(z)}$. 
The first and second relations in \eqref{PWP-PWoP-ests}, together with \eqref{lami-est'} and  the fact that $U(\lam)\le 0$, 
gives the following bounds 
    \begin{align*}  
 - \frac{k w_2}{\gamma_0 z^2} \le E_{\rm sym}^{(z)} +1 -\frac{w_1}{z}   \le  0. 
 \end{align*}
 which after rescaling  
 gives  \eqref{E1-ests}. 
    
Now, we consider the ground state of $H^{(z)}$ on \textit{anti-symmetric} functions. In this case, the ground state energy of $ \sum_{j=1}^2 (-\frac12\Delta_{x_j} - \frac{1}{|x_i|})$ is $e_1 +e_2 =-\frac{5}{8}$ of multiplicity $4$, with the ground states
\begin{equation}
	\label{eq:phi}
	\phi_{2\ell k}(x_1, x_2):=\frac1{\sqrt 2}(\psi _{100}(x_1) \psi _{2\ell k}(x_2) - \psi_{2 \ell k}(x_1) \psi _{100}(x_2)),
\end{equation}
for $(\ell, k)=(0, 0), (1, -1), (1, 0), (1, 1)$,  where $\psi _{n\ell k}(x)$ for $\ell = 0,1,2,\ldots, n-1,  k = -\ell, -\ell+1, \ldots, \ell,
  n = 1, 2, \ldots$,  are  the  eigenfunctions of the Hydrogen-like Hamiltonian $-\frac12\Delta_{x} - \frac{1}{|x|}$ corresponding to eigenvalues $e_n$, so that $\psi _{1 00}(x)=\psi _{1}(x)$, see Appendix~\ref{sec:Hydr-EFs}. Now, we define the unperturbed operator as \[H_{0}:= \sum_{j=1}^2 \Big(-\frac{1}{2}\Delta_{x_j} - \frac{1}{|x_j|}\Big)+\frac{5}{8}+\beta,\]
  to have $\beta$ as the lowest eigenvalue of $H_{0}$.

 For the gap, since the first two eigenvalues are 
 \[
 	e_1+e_2+\frac{5}{8}+\beta=\beta \quad \mbox{and} \quad 
 	e_1+e_3+\frac{5}{8}+\beta (<e_2 +  e_2+\frac{5}{8}+\beta),
\]
we have $\g_0^{\rm as}=e_1+e_3 -  (e_1+e_2)=e_3 -  e_2$ giving $\g_0^{\rm as} =  \frac{5}{72}$.
  
  For condition \eqref{W-cond}, note first that since now $H_0=h_{x_1} + h_{x_2}+\frac58+\beta$,   
  Eq. \eqref{Wcoul-est'} implies 
 \[
 	 \frac{1}{|x_1 - x_2|} \le H_0+9+\frac38-\beta,
 	\]
 	  instead of \eqref{Wcoul-est}, which shows that \eqref{oPWoP-est} should be modified as 
  \begin{align*}
   \|\oP W\oP\|_{H_0 }\le \frac1z \frac{9+\frac38+\g_0^{\rm as}}{\beta+\g_0^{\rm as} }, 
  \end{align*}  
  in the anti-symmetric case.
  Since $\las^{\rm as}=\beta+\g_0^{\rm as} $, condition \eqref{W-cond}  is satisfied, if $\frac{1}{z}(9+\frac38+\g_0^{\rm as})\le b\g_0^{\rm as}$, which gives 
 \[
 	z\ge \frac{9+\frac38+\g_0^{\rm as}}{b\g_0^{\rm as}}.
 \] Since $\gam_{0}^{\rm as}= \frac{5}{72}$ and $b=0.8$, this implies $z\ge 170$ for the anti-symmetric case.  
    
We skip the verification of conditions \eqref{W-cond2} and \eqref{W-cond3}, which are simpler and similar to that of the symmetric case.

  To estimate the ground state energy, $E_{\rm as}^{(z)}$, of $H^{(z)}$ on anti-symmetric functions, we use  bound \eqref {lam1-est}. First, we claim that $\Phi(W)$ defined in \eqref{Phi} satisfies 
  (cf. Remark 2)
     \begin{align}\label{PWoP-est1'}
	&\Phi(W) \le 
	\frac{\mu_{\rm max}}{\g_0^{\rm as}},\end{align}
where $\mu_{\rm max}$ is the largest eigenvalue of the positive, $4\times 4$-matrix $PW\oP W P.$	
Indeed, recall that  
\[
	\Phi(W):= \frac{\lax_{0 }\las\|\oP W P\|_{H_0 }^2}{\g_0^{\rm as}}
\] 
with $\lam_0=\beta$ and observe that $\|PW\oP \|_{H_0}=\|A \|$, with $A:=H_0^{-\frac12} PW\oP H_0^{-\frac12}$. Using now the relations
 \begin{align*}
	AA^*&=H_0^{-\frac12}PW\oP H_0^{-1} WPH_0^{-\frac12}, 
	\end{align*} 
	$H_0^{-\frac12} P=\lax_{0 }^{-\frac12} P$ (since $H_0P=\lax_{0 } P$), and 
$H_0^{-1}\oP \le \las^{-1}\oP$, 
we find 
\[
   AA^*\le (\lax_{0 }\las)^{-1}PW\oP WP,
\]
 which, together with $\|A \|=\|A^* \|=\|AA^*\|^{\frac12}$, gives 
\begin{align}\label{PWoP-est1}
	&\|PW\oP \|_{H_0 }^2
	\le 
	(\lax_{0 }\las)^{-1}\|PW\oP W P \|.\end{align}
Since $PW\oP WP\ge 0$, we have $\|PW\oP W P \|=\mu_{\rm max}$, which gives \eqref{PWoP-est1'}.

Hence,  we have to compute  the smallest  eigenvalue $\mu_{\rm min}$ of the  $4\times 4$ matrix $\Pz W\Pz$ and  the largest eigenvalue, $\mu_{\rm max}$, of the positive, $4\times 4$-matrix $PW\oP W P$,   see \eqref{lam1-est}.   
 
 By scaling, it is easy to see that the matrices $PW P$ and $PW\oP W P$ are of the form $PW P=z^{-1} A$ and $PW\oP W P=z^{-2} B$, where $A$ and $B$ are $z$-independent $4\times 4$ matrices. Hence 
  \begin{align}\label{w1as-w2as}
  	\mu_{\rm min}=\frac{ w_1^{\rm as}}{z}\quad \text{ and }\quad \mu_{\rm max}=\frac{w_2^{\rm as}}{z^2}, 
  	\end{align}
   for some positive $z$-independent constants $w_1^{\rm as}$ and $w_2^{\rm as}$. These constants are computed in Appendix \ref{NumericalComp}.
 Now, we see that  \eqref{lam1-est} and \eqref{PWoP-est1'} give 
 \begin{align*}
 \frac{w_1^{\rm as}}{z}-  \frac{ k w_2^{\rm as}}{\g_0^{\rm as} z^2} \le E_{\rm as}^{(z)}+\frac{5}{8}\le& \frac{w_1^{\rm as}}{z},
 \end{align*}
which after 
 rescaling and setting $k=10$ gives  \eqref{E1-ests} in the antisymmetric case. 
\end{proof}

  \begin{proof}[Proof of Lemma \ref{lem:PWP-PWoP-ests}]  The ground state energy, $e_1$, of $-\frac{1}{2}\Delta_{x} - \frac{1}{|x|}$ is non-degenerate, with the normalized ground state 
  (known as the  $1\mathrm {s}$ wavefunction)  given  by (see e.g. \cite{Mess}, or Wiki, with $a_0$ replaced by $1$ due to the rescaling above)
\begin{align*}
\psi _{1}(x)={\frac {1}{{\sqrt {\pi }}}}e^{-|x|}. 
\end{align*}
  Hence, the ground state energy, $0$, of $H_0$ is also non-degenerate, with the normalized ground state $\psi _{1}(x _{1})\psi _{1}(x _{2})$ (in the symmetric subspace).

First, we compute $\lan W\ran $ and  $\|PW P \|$. Let $(\psi \otimes \phi)(x, y):=\psi (x) \, \phi(y)$ and, for an operator $A$ on $L^2({\mathbb R}^{6})$, we denote $\lan A\ran := \lan \psi_1 \otimes \psi_1, A\psi_1 \otimes \psi_1\ran$. In the symmetric case, $PW P=\lan W\ran P$, where 
\begin{align}\label{PWP-expr}	&\lan W\ran := \lan \psi_1 \otimes \psi_1, W\psi_1 \otimes \psi_1\ran = \frac1z\int_{\R^6}	\frac{ | \psi_{1 }(x)\psi_{1 }(y)|^2}{|x-y|} \; dx\, dy=:\frac1z w_1.\end{align} 
 Hence $\|PW P \|=\frac{w_1}{z}$. This gives the first relation in \eqref{PWP-PWoP-ests}.

		Now, using that $\oP=\one-P$, we compute
	\begin{align}\label{PWoPWP-expr}	& PW\oP W P=PW^2 P - PW PW P  = \frac{w_2}{z^2} P,
   \end{align}
and
   \begin{align}
& w_2 := z^2\big(\lan  W^2 \ran- \lan W \ran^2\big) \nonumber\\
\label{w2}	& \quad =\int_{\R^6}	\frac{ | \psi_{1 }(x)\psi_{1 }(y)|^2}{|x-y|^2} \; dx\, dy-\bigg(\int_{\R^6}	\frac{ | \psi_1(x)\psi_1(y)|^2}{|x-y|} \; dx\, dy\bigg)^2. \end{align}
By the Schwarz inequality and the normalization of $\psi_1$, we have $w_2\ge 0$, which, together with \eqref{PWoPWP-expr}, gives
\begin{align}\label{PW2P-est1}	&
\|PW\oP W P \|=	 \frac{w_2}{z^2} .
\end{align}
Eqs. \eqref{PWoP-est1} and \eqref{PW2P-est1} imply the second relation in \eqref{PWP-PWoP-ests}.	\end{proof}

\begin{acknowledgments}  
   The second author is grateful to Volker Bach and J\"urg Fr\"ohlich for enjoyable collaboration. 
   The authors are indebted to the anonymous referee for many useful suggestions and remarks.  
\end{acknowledgments}

\appendix

\section{The  eigenfunctions of the Hydrogen-like Hamiltonian} \label{sec:Hydr-EFs}

The  eigenfunctions   $\psi _{n\ell k}(x)$   are given by (see Wiki, ``Hydrogen atom'' with $a_0$ replaced by $1$) 
 \begin{align}\label{psinlk} \psi _{n\ell k}(x)=\sqrt {{\left({\frac {2 z}{n}}\right)}^{3}{\frac {(n-\ell -1)!}{2n(n+\ell )!}}}e^{-\rho /2}\rho ^{\ell }L_{n-\ell -1}^{2\ell +1}(\rho )Y_{\ell }^{k}(\theta ,\phi),
\end{align}
for $\ell = 0,1,2,\ldots, n-1,  k = -\ell, -\ell+1, \ldots, \ell,
  n = 1, 2, \ldots$. Here $(r,\theta ,\phi)$ are the polar coordinates of $x$,  ${\displaystyle \rho ={2z r \over {n}}}$,   ${\displaystyle L_{n-\ell -1}^{2\ell +1}(\rho )}$ is a generalized Laguerre polynomial of degree  ${\displaystyle n-\ell -1},$ and
${\displaystyle Y_{\ell }^{k}(\theta ,\phi )}$ is a spherical harmonic function of the degree 
$\ell$  and order  $k$.
\footnote{Quoting from Wikipedia (\url{https://en.wikipedia.org/wiki/Hydrogen_atom}, 30.10.2020): ``... the generalized Laguerre polynomials are defined differently by different authors. The usage here is consistent with the definitions used by \cite{Mess},  p. 1136, and Wolfram Mathematica. In other places, the Laguerre polynomial includes a factor of  ${\displaystyle (n+\ell )!}$. 
...  or the generalized Laguerre polynomial appearing in the hydrogen wave function is  ${\displaystyle L_{n+\ell }^{2\ell +1}(\rho )}$ instead.''
} 

\section{The numerical approximation of the constants $w_1$, $w_2$, $w_1^{as}$ and $w_2^{as}$}
\label{NumericalComp}

Let us first focus on $w_1$ and $w_2$.
From the definition 
\[
	w_1 := \int_{\R^6}	\frac{ | \psi_{1 }(x)\psi_{1 }(y)|^2}{|x-y|} \; dy\, dx, 
\]
\[
	w_2 := \int_{\R^6}	\frac{ | \psi_{1 }(x)\psi_{1 }(y)|^2}{|x-y|^2} \; dy\, dx-w_1^2,
\]
it becomes clear that we need to compute terms of the form
\begin{align*}
   C_\alpha  & = \int_{\R^6}	\frac{ | \psi_{1 }(x)\psi_{1 }(y)|^2}{|x-y|^\alpha} \; dy\, dx \\
	& =
	\int_{\R^3} | \psi_{1 }(x)|^2	\int_{\R^3} \frac{|\psi_{1 }(x-z)|^2}{|z|^\alpha} \; dz\, dx,
\end{align*}
with $\alpha=1,2$.
We thus introduce a numerical quadrature in order to approximate the values of these integrals. 
We do not aim within this work to obtain the most efficient implementation.
We define a numerical quadrature grid in terms of spherical coordinates with integration points $x_{i,n}$ and weights $\omega_{i,n}$ given by
\[
	x_{i,n} = r_i \, s_n, 
	\qquad 
	\omega_{i,n} = h \, r_{i}^3 \, \omega_n^{\rm leb}
\]
with 
\[
r_i =\exp(-\ln(R_{\rm max})+(i-1)h), \qquad i=1,\ldots,N_r, \qquad h= \frac{2\ln(R_{\rm max})}{N_r-1}
\] and where $\{s_n,\omega^{\rm leb}_n\}$ denote Lebedev integration points on the unit sphere with $N_{\rm leb}$ points.
Thus, $N_r$ denotes the number of points along the $r$-coordinate, $R_{\rm max}$ the radius of the furthest points away from the origin.

We thus approximate $C_\alpha$ by
\[
	C_\alpha 
	\approx
	\sum_{i=1}^{N_r} 
	\sum_{n=1}^{N_{\rm leb}}
	\omega_{i,n}
	| \psi_{1 }(x_{i,n})|^2
	\Phi_\alpha(x_{i,n})
\]
with
\[
	\Phi_\alpha(x_{i,n})
	=
	\sum_{j=1}^{N_r} 
	\sum_{m=1}^{N_{\rm leb}}
	\omega_{j,m}
	\frac{|\psi_{1 }(x_{i,n}-x_{j,m})|^2}{|x_{j,m}|^\alpha}.
\]
Then, we approximate $w_1\approx C_1$ and $W_2\approx C_2-C_1^2$.

We now shed our attention to the constants $w_1^{\rm as}$ and $w_2^{\rm as}$ and we first introduce the bi-electronic integrals, for $(\ell, k),(\ell', k')=(0, 0), (1, -1), (1, 0), (1, 1)$, by
\begin{align*}
	{\sf A}_{\ell,k}^{\ell',k'}
	&= 
	\int_{\R^3} \psi_{2\ell k }(x)\psi_{2\ell' k' }(x)	\int_{\R^3}\frac{ | \psi_{1 }(y)|^2}{|x-y|} \; dy\, dx,
	\\
	{\sf B}_{\ell,k}^{\ell',k'}
	&= 
	\int_{\R^3} \psi_{1 }(x)\psi_{2\ell' k' }(x)	\int_{\R^3}\frac{  \psi_{1 }(y)\psi_{2\ell k }(y)}{|x-y|} \; dy\, dx.
\end{align*}
Then, using the definition~\eqref{eq:phi} and symmetry properties, we derive the following expressions for the matrix elements
\[
	{\sf M}_{\ell,k}^{\ell',k'}
	= z\, \lan \psi_{2\ell k } |W| \psi_{2\ell' k' } \ran 
	= {\sf A}_{\ell,k}^{\ell',k'} - {\sf B}_{\ell,k}^{\ell',k'}.
\]
We then employ the same quadrature rule as above to approximate the matrix elements ${\sf A}_{\ell,k}^{\ell',k'} $, ${\sf B}_{\ell,k}^{\ell',k'}$ and compute the smallest eigenvalue of $\sf M$ as an approximation of $w_1^{\rm as}$. 

Finally, the approximation of $w_2^{\rm as}$ involves the matrix 
\[
	PW\oP W P = PW^2P - P W P WP ,
\]
whose elements are given by ${\sf Q}  = {\sf N} -{\sf M}^2$ with
\[
	{\sf N}_{\ell,k}^{\ell',k'}
	= z^2\, \lan \psi_{2\ell k } |W^2| \psi_{2\ell' k' } \ran 
	= {\sf C}_{\ell,k}^{\ell',k'} - {\sf D}_{\ell,k}^{\ell',k'}
\]
and 
\begin{align*}
	{\sf C}_{\ell,k}^{\ell',k'}
	&= 
	\int_{\R^3} \psi_{2\ell k }(x)\psi_{2\ell' k' }(x)	\int_{\R^3}\frac{ | \psi_{1 }(y)|^2}{|x-y|^2} \; dy\, dx,
	\\
	{\sf D}_{\ell,k}^{\ell',k'}
	&= 
	\int_{\R^3} \psi_{1 }(x)\psi_{2\ell' k' }(x)	\int_{\R^3}\frac{  \psi_{1 }(y)\psi_{2\ell k }(y)}{|x-y|^2} \; dy\, dx.
\end{align*}
We then again, approximate the different integrals with the above introduced quadrature rule and find the maximal eigenvalue of $\sf Q$ in order to approximate~$w_2^{\rm as}$.

In Figure~\ref{fig:fig}, we plot the the approximate value of $w_1$, $w_2$, $w_1^{\rm as}$ and $w_2^{\rm as}$ for different values of the parameters $R_{\rm max}$, $N_r$ and $N_{\rm leb}$ given in Table~\ref{tab:quad'}.

\begin{figure}[t!]
	\centering
    \includegraphics[trim = 0mm 0mm 0mm 0mm, clip,width=0.75\textwidth]{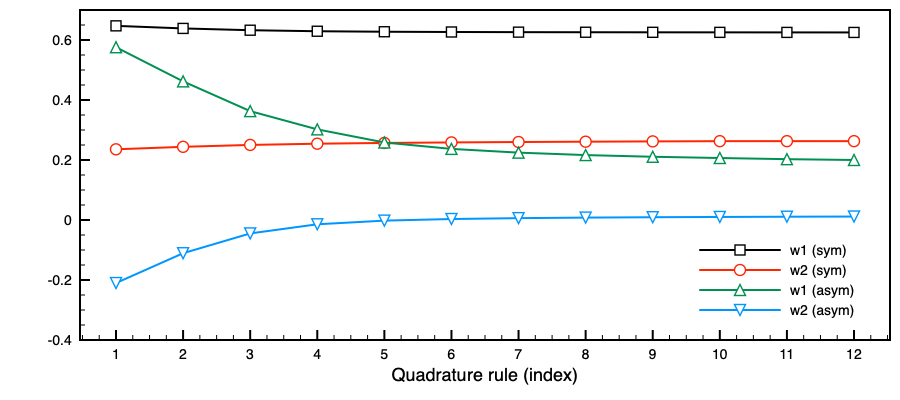}
    \caption{
    Value of the approximate coefficients $w_1$, $w_2$, $w_1^{\rm as}$ and $w_2^{\rm as}$ depending on the quadrature rule reported in Table~\ref{tab:quad'}.
 	}
	\label{fig:fig}
\end{figure}

We obtain approximate values 
\[
	w_1\approx 0.625,\quad w_2 \approx 0.2628, \quad w_1^{\rm as} \approx 0.3, \quad w_2^{\rm as} \approx 0.01
\]
 where the last reported digit is stable over the last three quadrature rules.
\\
\begin{table}[h!]
\begin{center}
\scriptsize
\begin{tabular}{|c|cccccccccccc|}
\hline
index
& 1 & 2 & 3 & 4 & 5 & 6 & 7 & 8 & 9 & 10 & 11 & 12
\\
\hline \hline
$R_{\rm max}$
& 16 & 18 & 20 & 22 & 24 & 26 & 28 & 30 & 32 & 34 & 34 & 34
\\
\hline
$N_r$
& 10 & 12 & 14 & 16 & 18 & 20 & 22 & 24 & 26 & 28 & 30 & 32
\\
\hline
$N_{\rm leb}$ 
& 194 & 266 & 350 & 590 & 974 & 1454 & 2030 & 2702 & 3470 & 4334 & 5294 & 5810
\\
\hline
\end{tabular}
\end{center}
\caption{Parameters for the different quadrature rules.}
\label{tab:quad'}
\end{table}

\newpage


\begin{thebibliography} {CFKS}

\bibitem{BFS}V. Bach, J. Fr\"ohlich  and I.M. Sigal,  Quantum electrodynamics of confined non-relativistic particles. Adv. Math., 137(2):299--395, 1998.


\bibitem{CFKS}
H. Cycon, R. Froese, W. Kirsch, B. Simon,
{\it Schr\"odinger Operators (with Applications
to Quantum Mechanics and Global Geometry}.
Springer, 1987. 

\bibitem{DSS}  G. Dusson, I.M. Sigal and B. Stamm, Analysis of the Feshbach-Schur method for the planewave discretizations of Schr\"odinger operators. 	arXiv:2008.10871.


\bibitem{GS} S. Gustafson and I.M. Sigal, {\it Mathematical Concepts of Quantum Mechanics}. Springer Science \& Business Media, Sept. 2011.

\bibitem{HS}
P. Hislop and I.M. Sigal,
{\it Introduction to Spectral Theory.
With Applications to Schr\"odinger Operators}.
Springer-Verlag, 1996.


\bibitem{Ka} T. Kato, {\it Perturbation theory for linear operators}. Springer-Verlag, 
 1976.

\bibitem{Mess} A. Messiah,  {\it Quantum Mechanics}. New York: Dover. (1999). 


\bibitem{NakNak}H. Nakashima, H. Nakatsuji,
Solving the Schr\"odinger equation for helium atom and its isoelectronic ions with the free iterative complement interaction (ICI) method,
J. Chem. Phys. 127, 224104 (2007). 

\bibitem{NakNak2} H. Nakashima, H. Nakatsuji, Solving the electron-nuclear Schr\"odinger equation of helium atom and its isoelectronic ions with the free iterative-complement-interaction method.
J. Chem. Phys. 128, 154107 (2008).



\bibitem{PachPatkYer} K. Pachucki, V. Patkos and V.A. Yerokhin,
Testing fundamental interactions on the helium atom. Phys. Rev. A95 (2017) 062510.


\bibitem{RSII}
M. Reed and B. Simon,
{\it Methods of Modern Mathematical Physics, Vol. II.
Fourier Analysis and Self-Adjointness.}
Academic Press, 1975.



\bibitem{RSIV}
M. Reed and B. Simon,
{\it Methods of Modern Mathematical Physics IV:
Analysis of Operators}.
Academic Press, 1979.

\bibitem{Rell}
F. Rellich, {\it Perturbation Theory of Eigenvalue Problems}.
Gordon and Breach:  New York, 1969.

\bibitem{Sim1} B. Simon, A Comprehensive Course in Analysis, Part 4: Operator Theory (American Mathematical Society, Providence, RI, 2015).

\bibitem{Sim2} B. Simon, Tosio Kato's work on non-relativistic quantum mechanics: Part 1 and 2, Bulletin of Mathematical Sciences, Vol 8 (2018), 121--232 and Vol 9 (2019) 1950005 (105 pages).



\bibitem{Turb1} A. V. Turbiner  and J. C. L. Vieyra, Helium-like and Lithium-like ions: Ground state energy.
arXiv:1707.07547v3, 2017.


\bibitem{Turb2} A. V. Turbiner, J. C. L. Vieyra, J.C. del Valle, and DJ Nader, Ultra-compact accurate wave functions for He-like and Li-like iso?electronic sequences and variational calculus: I. Ground state.
International Journal of Quantum Chemistry, e26586, 2020; 
  A. V. Turbiner, J. C. L. Vieyra and J.C. del Valle, Ultra-compact accurate wave functions for He-like and Li-like iso-electronic sequences and variational calculus. I. Ground state.
arXiv:2007.11745v, 2020.


\bibitem{Turb3} A.V. Turbiner, J.C. L. Vieyra, H. Olivares-Pil\'on, Few-electron atomic ions in non-relativistic QED: The ground state.
Annals of Physics 409 (2019) 167908



\bibitem{YerPach} V.A. Yerokhin and K. Pachucki,
Theoretical energies of low-lying states of light helium-like ions. Phys Rev A81 (2010) 022507.


\end{thebibliography}
\end{document}